\newtheorem{theorem}{Theorem}[section]
\newtheorem{lemma}[theorem]{Lemma}
\newtheorem{remark}[theorem]{Remark}
\newtheorem*{theorem*}{Theorem \ref{thm:main}}
\newtheorem*{corollary1}{Corollary \ref{cor:result}}
\newtheorem{definition}[theorem]{Definition}
\newtheorem{corollary}[theorem]{Corollary}
\newcommand{\F}{\mathbb{F}}
\newcommand{\Z}{\mathbb{Z}}
\newcommand{\poly}{\mathsf{poly}}
\newcommand{\Fam}{\mathcal{F}}
\newcommand{\U}{\mathcal{U}}
\newcommand{\V}{\mathcal{V}}
\newcommand{\vecX}{\mathbf{X}}
\newcommand{\mgroup}{H_m}
\newcommand{\mgen}{\gamma_m}
\newcommand{\Otilde}{\Tilde{O}}
\def\anon{0}
\newcommand{\fnote}[1]{{\color{brown} [Fatemeh: #1]}}
\newcommand{\mnote}[1]{{\color{red} [Madhu: #1]}}
\newcommand{\snote}[1]{{\color{blue} [Swastik: #1]}}
\newcommand{\fnote}[1]{}
\newcommand{\mnote}[1]{}
\newcommand{\snote}[1]{}
\title{Improved PIR Schemes using Matching Vectors and Derivatives} 
\author{Fatemeh Ghasemi\thanks{Department of Mathematics, University of Toronto, Canada. Email: \texttt{fatemeh.ghasemi@mail.utoronto.ca}}
\and 
Swastik Kopparty\thanks{Department of Mathematics and Department of Computer Science, University of Toronto, Canada. Research supported by an NSERC Discovery Grant.
Email: \texttt{swastik.kopparty@utoronto.ca}}
\and 
Madhu Sudan\thanks{School of Engineering and Applied Sciences, Harvard University, Cambridge, Massachusetts, USA. Supported in part by a Simons Investigator Award and NSF Award CCF 2152413. Email: \texttt{madhu@cs.harvard.edu}}
}}
\author{}
\begin{document}

\maketitle

\begin{abstract}

In this paper, we construct new $t$-server Private Information Retrieval (PIR) schemes with communication complexity subpolynomial in the previously best known, for all but finitely many $t$. 
Our results are based on combining derivatives (in the spirit of Woodruff-Yekhanin~\cite{WY}) with the Matching Vector based PIRs of Yekhanin~\cite{Yekhanin} and Efremenko~\cite{Efremenko}. Previously such a combination was achieved in an ingenious way by Dvir and Gopi~\cite{DG}, using polynomials and derivatives over certain exotic rings,  en route to their fundamental result giving the first $2$-server PIR with subpolynomial communication. 
\\
\\
Our improved PIRs are based on two ingredients:
\begin{itemize}
\item We develop a new and direct approach to combine derivatives with Matching Vector based PIRs. This approach is much simpler than that of Dvir-Gopi: it works over the same field as the original PIRs, and only uses elementary properties of polynomials and derivatives.
\item A key subproblem that arises in the above approach is a higher-order polynomial interpolation problem. We show how ``sparse $S$-decoding polynomials", a powerful tool from the original constructions of Matching Vector PIRs, can be used 
to solve this higher-order polynomial interpolation problem using surprisingly few higer-order evaluations.
\end{itemize}
Using the known sparse $S$-decoding polynomials from~\cite{Efremenko,IS,Cheeetal} in combination with our ideas leads to our improved PIRs. Notably, we get a $3$-server PIR scheme with communication $2^{\Otilde( (\log n)^{1/3}) }$, improving upon the previously best known communication of $2^{\Otilde( \sqrt{\log n})}$ due to Efremenko~\cite{Efremenko}.

%and in particular giving a simpler proof of the Dvir-Gopi results. This method works natively over the finite field $\F_p$ and only uses elementary properties of polynomials over finite fields. 

\end{abstract}

\newpage 
%\section{References}

\section{Introduction}
Private Information Retrieval (PIR), introduced in \cite{CGKS} by Chor, Kushilevitz, Goldreich and Sudan, is a method for a user to interact with $t$ non-colluding servers and read some part of a database without revealing to the servers which part of the database was read. Specifically, a database $(a_1, \ldots, a_n) \in \{0,1\}^n$ is known to all the $t$ servers, and the user wants to find the value of $a_\tau$ without revealing any information about $\tau$ to any server. PIRs are typically studied with $t$ being a constant, and we will only consider this case.  There is a trivial protocol which works even for $t=1$: some server just sends the entire database to the user -- this uses $n$ bits of communication. With more servers it is (surprisingly) possible to use much less communication. The main problem, which has been extensively studied in the subsequent years, is to design PIR schemes with as little total communication as possible.

The first generation of PIR schemes were based on Reed-Muller codes, also known as multivariate polynomial evaluation codes, over finite fields $\F$. We give a quick taste of the most elementary such PIR scheme. The original data $a_1, \ldots, a_n$  is used to specify a multivariate polynomial $F(X_1, \ldots, X_k) \in \F[X_1, \ldots, X_n]$ of degree $t-1$ in $k = n^{1/(t-1)}$ variables by ensuring that the value of $F$ at a certain point $u_i \in \F^k$ equals $a_i$. When the user wants to access $a_\tau$, it chooses a uniformly random line $\ell$ through $u_\tau$, 
picks points $b_1, \ldots, b_{t}$ on $\ell$, and asks server $j$ for the value of $F$ at $b_j$. Using the fact that uniformly random lines through $u_i$ cover the space $\F^k$ uniformly, we get that each $b_i$ is uniformly distributed, and thus no server learns anything about $\tau$. Using the fact that the restriction $F|_{\ell}$ of the low degree multivariate polynomial $F$ to the line $\ell$ is a low degree univariate polynomial, we get that the univariate polynomial $F|_{\ell}$ can be completely recovered from its values at the $b_i$ -- and thus $a_\tau$, the value of $F$ at $u_\tau$ can be recovered by the user. This gives a PIR scheme with $O(k) = O(n^{1/(t-1)})$ communication, and is nontrivial for $t \geq 3$.

With more sophisticated ideas,~\cite{CGKS} gave a $2$-server PIR scheme with communication $O(n^{1/3})$, and even lower communication for $t$-server PIR for $t \geq 3$. Further improvements to this were given by~\cite{Amb97,BI01,BIKR02}, leading to a
$n^{O(\frac{\log\log t}{t \log t })}$-communication protocol for large $t$.

Then in 2007, the breakthrough result of Yekhanin~\cite{Yekhanin} completely changed the landscape by giving $3$-server PIR schemes with subpolynomial ($n^{O(\frac{1}{\log\log n})}$) communication (assuming the infinitude of Mersenne primes). Soon after, building on a greatly clarifying presentation and reinterpretation of~\cite{Yekhanin} by Raghavendra~\cite{Raghavendra}, a beautiful result of Efremenko unconditionally achieved
significantly smaller communication $2^{\Otilde(\sqrt{\log n})}$ for $3$-server PIR, and even lower communication (of the form $2^{\Otilde( (\log n)^{\varepsilon_t} ) }$ for more servers).

There are two key ingredients in these constructions: $S$-Matching Vector Families (SMVF), and Sparse $S$-Decoding polynomials (SSD) for a set $S \subseteq \Z_m$ with $0 \in S$. $S$-Matching Vector Families are collections of vectors $(u_i, v_i) \in \Z_m^k \times \Z_m^k$ with certain restrictions on the inner products $\langle u_i, v_j \rangle \in \Z_m$: all $\langle u_i,v_j\rangle$ should lie in $S$, with $\langle u_i, v_j \rangle = 0$ if and only if $i = j$. The challenge is to get as large a collection as possible. Sparse $S$-Decoding polynomials are sparse polynomials that take certain prescribed values at the set of points $ \{ \gamma^{s}: s \in S \}$, where $\gamma$ is a primitive $m$th root of $1$ in some field.  The challenge is to get the sparsity (= number of monomials) as small as possible. Yekhanin's original construction took $m$ being prime, and was based on moderately large $S$-Matching Vector Families but astonishingly sparse $S$-decoding polynomials. Efremenko's construction was based on $m$ being composite, and was based on very large $S$-Matching Vector Families (originating in the work of Grolmusz~\cite{Gro} and Barrington-Beigel-Rudich~\cite{BBR}) and slightly nontrivially sparse $S$-decoding polynomials.
Subsequent work by Itoh-Suzuki~\cite{IS} and Chee-Feng-Ling-Wang-Zhang~\cite{Cheeetal} improved the sparsity of $S$-decoding polynomials, which led to quasipolynomial reduction in the communication over Efremenko's result) for $t$-server PIR for all $t \geq 9$.

We refer to this entire approach as the SMVF+SSD framework. We will give a detailed overview of this framework soon.

In the other direction, there are only very weak lower bounds known for PIR. Wehner and de Wolf \cite{WdW} showed a $5 \log n$ lower bound on communication for $2$-server PIR. It is possible (but would be very surprising if so) that there are $2$-server PIR schemes with $O(\log n)$ communication.

PIR schemes are closely related to Locally Decodable Codes (LDCs), and progress on both has often come together. Indeed,
Reed Muller codes are the original LDCs, and the SMVF+SSD framework is also the engine behind the record holding Matching Vector Code constant query LDCs.
In the non-constant query regime, local decoding is still very interesting,
and Matching Vector Codes are LDCs in this setting too -- this was shown by Dvir-Gopalan-Yekhanin~\cite{DGY} and Ben-Aroya-Efremenko-Ta-Shma~\cite{BET} -- whose work also shed further light on the SMVF + SSD framework.
For the formal connections between PIR schemes and LDCs, see Katz-Trevisan~\cite{KatzTrevisan} and Trevisan~\cite{Trevisan04}. 

%\snote{Chee et al have these citations for KT and T }

One basic question that remained open for a while was whether 2-server PIR could be done with subpolynomial communication. 
%This is especially interesting since there are exponential lower bounds for 2 query LDCs over small alphabets.
This was resolved by the beautiful work of Dvir-Gopi~\cite{DG}, which is also the starting point for our work.
Dvir-Gopi developed a way to reduce the number of servers in Matching Vector Family based PIRs by increasing the amount of communication being sent by the servers. This required an ingenious new adaptation of an idea of Woodruff-Yekhanin~\cite{WY}, originally for the setting of Reed-Muller code based PIRs, to the setting of Matching Vector Family based PIRs. The Woodruff-Yekhanin idea in the setting of the Reed-Muller based PIR scheme described earlier, is to make the servers also return all the higher order partial derivatives of the multivariate polynomial $F$ at all the queried points $b_i$. This higher order evaluation information for the multivariate polynomial $F$ will let the user deduce, via the chain rule, higher order evaluations of the univariate degree $t-1$ polynomial $F|_{\ell}$ from each server-- and leads to less servers being needed overall. To get an analogue of this in the setting of Matching Vector Family based PIRs, Dvir-Gopi worked over an exotic ring $R_m = \Z_m[\gamma]/\langle\gamma^m - 1\rangle$ in place of a base finite field, and developed some theory of derivatives and interpolation from higher order evaluations over $R_m$. Not all facts from the finite field case carry over, but enough facts did, and this enabled the reduction of the number of servers needed for subpolynomial communication from $3$ to $2$.

For $2$ servers this gives PIR schemes with $2^{\Otilde(\sqrt{\log n})}$ communication.
The method of Dvir-Gopi  also generalized to a larger number of servers, and led to communication complexity subpolynomial in the previous best bounds, whenever the number of servers $t$ lies in $\{2\} \cup \{4,5,6,7,8\} \cup \{16, \ldots, 23\}$. (Specifically the previous communication complexity were quasipolynomial, but not polynomial, in the new communication complexity.) %\snote{CHECK!!!! The table is at the end of the paper.}
%\snote{I think it is right -- curiously, these are the exact same values where we don't improve the previous state of the art.

%The moral of this is that 
%Dvir-Gopi + us together improve the communication over the original SMVF + SSD framework for any number of servers. DG do it for finitely many values of $t$, we do it for the rest.

%}
For the remaining $t$ (which includes all $t\geq 27$), the previous PIR schemes of Efremenko instantiated with the sparse $S$-decoding polynomials of~\cite{IS,Cheeetal} remained the best known.

In this paper, we construct $t$-server PIR schemes with communication complexity that is subpolynomial in the previously best known bound, for all but finitely many $t$ (in fact, our improvements are for precisely the $t$ where the Dvir-Gopi scheme did not improve the state of the art).  Notably, we get a $3$-server PIR scheme with communication $2^{\Otilde( (\log n)^{1/3}) }$, improving upon the previously best known communication of $2^{\Otilde( \sqrt{\log n})}$ due to Efremenko~\cite{Efremenko}. A key part of our improvement is a new and simpler method to apply derivatives (in the spirit of Woodruff-Yekhanin) to reduce the number of servers in Matching Vector based PIRs -- and in particular giving a simpler proof of the Dvir-Gopi results. This method works natively over the finite field $\F_p$ and only uses elementary properties of polynomials over finite fields. 

As a by product of our simpler approach to applying derivatives, we are able to formulate a higher multiplicity analogue of the problem of having a sparse $S$-decoding polynomial. Miraculously, it turns out that whenever there is an unusually sparse $S$-decoding polynomial in the original sense, it automatically translates into an unusually sparse higher multiplicity
analogue of the $S$-decoding polynomial -- thus leading to an immediate reduction, using the existing results on sparse $S$-decoding polynomials from~\cite{Efremenko,IS,Cheeetal}, in the number of required servers (beyond the improvement that derivatives already gave).

We believe our result is the more natural way to use derivatives to improve the communication in the SMVF+SSD framework. The total communication of our new PIR scheme can be very succinctly stated as follows: {\em If the SMVF+SSD framework yields a $t$-server PIR scheme with communication $\exp(\Otilde((\log n)^{\frac1r})$, then we get a $t$-server PIR scheme 
with communication $\exp(\Otilde((\log n)^{\frac{1}{r+1}}))$.}

In terms of concrete parameters, the net result is that we improve the communication complexity to a subpolynomial in the previously best known, for the state of the art $t$-server PIR for all $t$ except $t \in \{2\} \cup \{4,5,6,7,8 \} \cup \{16, 17, \ldots, 23 \}$ (for these exceptions our bound is no better than that of Dvir and Gopi, although our protocol and the proof of correctness is simpler).
%See the table at the end of this paper for some numbers.

For concrete comparison, %it is actually easier to make 
we give a table listing the number of servers needed to achieve $\exp(\Otilde((\log n)^{1/c}))$ communication for different integers $c$. We list this below 
(for a few small values $c$) for our PIR scheme, the Dvir-Gopi scheme and the original Efremenko scheme (using the best known $S$-decoding polynomials from ~\cite{Efremenko,IS,Cheeetal}).

\begin{tabular}{|c|c|c|c|}
\hline
    Total communication     &   \# servers for our PIR scheme & \# servers for \cite{DG}  & \# servers for \cite{Efremenko} \\
    \hline
   $\exp(\Otilde((\log n)^{1/2}))$ &   2 & 2 & 3 \\
   $\exp(\Otilde((\log n)^{1/3}))$ &   3 & 4 & 8 \\
   $\exp(\Otilde((\log n)^{1/4}))$ &   8 & 8 & 9 \\
   $\exp(\Otilde((\log n)^{1/5}))$ &   9 & 16 & 24 \\
   $\exp(\Otilde((\log n)^{1/6}))$ &   24 & 32 & 27 \\
   $\exp(\Otilde((\log n)^{1/7}))$ &   27 & 64 & 72 \\
   $\exp(\Otilde((\log n)^{1/8}))$ &   72 & 128 & 81 \\
   $\exp(\Otilde((\log n)^{1/9}))$ &   81 & 256 & 216 \\
   \hline
\end{tabular}

\subsection{Overview}

We now give a quick introduction to PIR schemes based on the SMVF+SSD framework, discussing the important ideas of Yekhanin~\cite{Yekhanin}, Raghavendra~\cite{Raghavendra}, Efremenko~\cite{Efremenko} and Dvir-Gopalan-Yekhanin~\cite{DGY}.

Our presentation is slightly more algebraic than usual.  For the expert,
the main point is that it treats $F( \cdot)$, the multivariate function encoding the data, as a genuine polynomial with natural number exponents, and not as an exponential polynomial with $\Z_m$ exponents. We then consider its restriction to a parametrized curve $C(Z)$, leading to a univariate polynomial $A(Z) = F(C(Z))$ of quite large degree. It is only at this stage where we restrict ourselves to evaluations at $m$th roots of unity, by reducing the univariate polynomial $A(Z)$ mod $Z^m -1$.
This change in viewpoint will greatly help when considering derivatives: it is the key reason our PIR scheme and its proof are simpler than that of Dvir-Gopi, and also underlies our improved PIR schemes.
%\mnote{This previous sentence is a bit of an undersell. I would be a bit more pushy here: Trying ... ``While the construction below is exactly the same as in \cite{Efremenko}, our decoder and its analysis differ from theirs, and this difference is key to our later improvements. Specifically in the part where \cite{Yekhanin,Raghavendra,Efremenko} show that matching vectors over $\Z_m$ lead to $m$ query LDCs, whereas previous works all relied on the sparsity of some decoding polynomial, we directly get a degree $m$ polynomial whose coefficients contain the information to be decoded. This difference, while very subtle, is key and enables us later to recover this polynomial from evaluations of derivatives thereby leading to improved PIR schemes.''}
%\snote{In light of Section 3.2 of https://oar.princeton.edu/bitstream/88435/pr1kv6j/1/MatchingVectorCodes.pdf
%anybody who read DGY (eg D G or Y) will all object to the claim that avoiding sparsity arguments is novel in the multiplicity-free case. I tried to articulate what is novel in the multiplicity free case just now.}
%\mnote{Sounds good. Commenting all this out}

Let $\F$ be a field, and let $(a_1, \ldots, a_n) \in \F^n$ be the database for which we want a PIR scheme. The key ingredient that we need is a matching vector family.
For an integer $m$ and a set $S \subseteq \Z_m$ with $0 \in S$, an {\em $S$-matching vector family}
is a collection of pairs $(u_1, v_1), \ldots, (u_n, v_n)$, where each $u_i$ and $v_i$
lies in $\Z_m^k$ such that the inner products obey the following restrictions:
$$ \langle u_i, v_j \rangle \in \begin{cases}
    \{0\} & i = j \\
    S \setminus \{0 \} & i \neq j
\end{cases}.$$
We will also view the $u_i$ and $v_i$ as vectors in $\{0,1,\ldots, m-1\}^k \subseteq \mathbb N^k$. A key fact is that whenever $m$ is composite, superpolynomial size matching vector families exist.

%\snote{NEW CONVENSTION - $H_m$ is the group of $m$th roots of $1$.}
Next we choose a field $\F$ of characteristic $p$ relatively prime to $m$ which contains 
a set $\mgroup$ of all $m$ $m^{\mathrm{th}}$ roots of unity.

The database is used to specify a multivariate polynomial $F(\mathbf X) \in \F[X_1, \ldots, X_k]$ as follows:
$$ F(\mathbf X) =  \sum_{i = 1}^{n} a_i \mathbf X^{u_i}.$$
The $u_i$, which are a-priori in $\Z_m^k$, are treated here as elements of $\mathbb N^k$. This is part of the change in viewpoint that will help us later.
%It may be concerning that the exponent vectors are in $\Z_m^k$, but we will only be evaluating $F$ at points in $H^k$, so this is okay.

Now if the user wants to recover $a_{\tau}$, it will use $v_\tau$ to pick a random parametrized curve $C(Z) \in (\F[Z])^k$ (a $k$-tuple of univariate polynomials)
as follows: for uniformly random ${\bm \beta} = (\beta_1, \ldots, \beta_k) \in \mgroup^k$, it sets $C(Z) = (\beta_1 Z^{v_{\tau,1}}, \ldots, \beta_k Z^{v_{\tau, k}})$. 

Because of the randomness of ${\bm \beta}$, for any $h \in \mgroup$, $C(h)$ is distributed uniformly at random in $\mgroup^k$. Thus asking each server for the value of $F$ at a single $C(h)$ will maintain privacy perfectly. This will give us access to values of the composed polynomial $A(Z) = F(C(Z))$ at some points $h \in \mgroup$.
    
The key point in the definition of $C$ is that $A(Z)$ has the following nice form:
$$A(Z) = \sum_{i=1}^{n} a_i {\bm \beta}^{u_i} Z^{\langle u_i, v_\tau \rangle},$$
where the inner products $\langle u_i, v_\tau \rangle$ are treated as inner products of nonnegative integer vectors (and thus $A(Z)$ is of huge degree $\approx k \cdot m^2$).

To utilize the matching vector property, which talks about the inner products
$\langle u_i, v_\tau\rangle$ mod $m$ and not the integer $\langle u_i, v_\tau \rangle$, we will only substitute $Z$ to be an element of $\mgroup$.
Equivalently, we can talk about the remainder of $A(Z)$ mod the polynomial $Z^m -1$, 
which has the effect of reducing all the exponents of $Z$ mod $m$. This is the view we take, and this is the point of difference alluded to earlier.

%\mnote{``This is the view we take, and this is the point of difference alluded to earlier.''}

Define $A_1(Z) = A(Z) \mod (Z^m - 1)$.
By the matching vector property and the above discussion, 
$A_1(Z)$ is of the form:
$$A_1(Z) = a_\tau {\bm \beta}^{u_\tau} + \sum_{s \in S \setminus \{0\}} c_s Z^s,$$
and thus the $a_\tau$ we seek can be found from the constant term of $A_1(Z)$.
Further, since the polynomial $Z^m-1$ has all the elements of $\mgroup$ as roots, the values of $A_1$ on $\mgroup$ equal the values of $A$ on $\mgroup$. Thus the user gets access to $A_1$ at some points $h \in \mgroup$.

What remains for the user is to solve an interpolation problem: {\em given evaluations of a polynomial of the form $A_1(Z)$ at some points in $\mgroup$, find the constant term of $A_1$}. The number of evaluations needed determines the number of servers - and this is where sparse $S$-decoding polynomials come in -- sometimes this interpolation problem can be solved using surprisingly few evaluation points.

\paragraph{Using Derivatives:}
Now we will see how to improve this using derivatives.

Following the idea of Woodruff-Yekhanin in the setting of Reed-Muller code based PIR schemes,
Dvir and Gopi~\cite{DG} suggested asking the servers for not only the value of $F$ at points $C(h)$, but also for higher order derivatives of $F$ at these points. The hope was that fewer higher-order evaluations are needed to solve the interpolation problem, and thus fewer servers are needed for the PIR scheme. This created several technical complications. Most significantly, taking derivatives of a polynomial whose exponents are in $\Z_m$ seemed to require (since $(Z^i)' = i\cdot Z^{i-1}$ -- the $i$ came down from the exponent to the coefficient) the coefficient ring of the polynomials to accommodate elements of $\Z_m$. This was handled by Dvir and Gopi through an ingenious idea -- to replace the coefficient field $\F$ for the polynomials with a {\em ring} $R_m$ which by design contains the $m$th roots of $1$ and also $\Z_m$; namely $R_m = \Z_m[\gamma]/\langle \gamma^m -1 \rangle$.

The use of the exotic ring $R_m$ enabled many aspects of this idea to go through; however the interpolation properties of values and derivatives do not behave as nicely as they do over fields. Here~\cite{DG} tackle the resulting interpolation problem hands on, and the final proof ultimately relies on nonvanishing (but not noninvertibility!) of some mysterious $4\times 4$ determinants over $R_m$ in the $2$-server case, and an explicit solving of a certain linear system over $R_m$ by reducing to rings of the form $\Z_p[\gamma]/\langle \gamma^m - 1\rangle$ and the Chinese remainder theorem for the general $t$-server case. 

We remark that some further simplifications are possible in the $2$-server case.
For $2$-servers and when $m = 6$, Dvir-Gopi find a homomorphism from the ring $R_6$ to the field $\F_3$ which preserves the vital nonvanishing determinant in their proof -- thus the entire protocol can be made to work directly over $\F_3$. For a larger number of servers there does not seem to be an analogue of this.
%\mnote{Perhaps be more generous to this new result. So ``In a recent striking result, Alon, Beimel and Lasri~\cite{ABL2024} find an extremely elementary protocol and proof of the $2$-server PIR result of~\cite{DG}, without ever mentioning polynomials (but still using Matching Vector Families).''}
In a recent striking result, Alon, Beimel and Lasri~\cite{ABL2024} find an extremely elementary protocol and proof of the $2$-server PIR result of~\cite{DG}, without ever mentioning polynomials (but still using Matching Vector Families). This is also only for 2 servers.

We now describe our approach to using derivatives to improve the communication of the matching vector PIR schemes. It works over the original field $\F$ itself and uses only classical derivatives over fields.
We will set $M = mp$ and take the matching vector family $(u_i, v_i)$ over the ring $\Z_{M}$.
Then again we encode the data $(a_1, \ldots, a_n) \in \F^n$ in the polynomial 
$$ F(\mathbf X) = \sum_i a_i {\mathbf X}^{u_i}.$$
Since $M$ has one more prime factor than $m$, the size of the matching vector family that we take can be quasipolynomially larger than what can be done in the original Efremenko scheme.
%we will treat the exponent vectors as vectors of integers, not elements of $\Z_M$.
We will still only be evaluating $F$ at points of $\mgroup$, the $m^{\rm{th}}$ roots of $1$.

Now suppose the user wants to recover $a_\tau$.
It will again pick a random curve $C(Z)$ just as before:
$C(Z) = ( \beta_1 Z^{v_{\tau,1}}, \ldots, \beta_k Z^{v_{\tau,k}})$,
where ${\bm \beta} \in \mgroup^k$.
Because of the randomness of ${\bm \beta}$, for any $h \in \mgroup$, $C(h)$ is distributed uniformly at random in $\mgroup^k$. Thus asking each server for the value and higher order derivatives of $F$ at a single $C(h)$ will maintain privacy perfectly. Via the chain rule for derivatives, this will give the user access to values and higher order derivatives of the the composed polynomial $A(Z) = F(C(Z))$ at some points $h \in \mgroup$.

Again, the key point in the definition of $C$ is that $A(Z)$ has the following nice form:
$$A(Z) = \sum_{i=1}^{n} a_i {\bm \beta}^{u_i} Z^{\langle u_i, v_\tau \rangle}.$$
To utilize the matching vector property, which talks about the inner products
$\langle u_i, v_\tau\rangle$ mod $M$ and not the integer $\langle u_i, v_\tau \rangle$, we will reduce $A(Z)$ mod $(Z^M - 1)$.

%only substitute $Z$ to be an element of $\mgroup$.
%Equivalently, we can talk about the remainder of $A(Z)$ mod the polynomial $Z^m -1$, 
%which has the effect of reducing all the exponents of $Z$ mod $m$. This is the view %we take.

Define $A_1(Z) = A(Z) \mod (Z^M - 1)$.
By the matching vector property and the above discussion, 
$A_1(Z)$ is of the form:
$$A_1(Z) = a_\tau {\bm \beta}^{u_\tau} + \sum_{s \in S \setminus \{0\}} c_s Z^s,$$
and thus the $a_\tau$ we seek can be found from the constant term of $A_1(Z)$.

Now for the crucial point. Since we are in characteristic $p$, 
$$Z^M - 1 = Z^{mp}-1^p = (Z^{m} - 1)^p,$$
which vanishes at each of the $m$ points of $\mgroup$ with multiplicity $p$.

Thus the evaluations  of $A_1(Z)$ {\em and its first $p-1$ derivatives} agree with the
evaluations of $A(Z)$ and its first $p-1$ derivatives at the points of $\mgroup$.
Since we have access to these higher order evaluations of $A(Z)$ at the points of $\mgroup$, we now find ourself faced with a higher-order interpolation problem:
{\em Given higher order evaluations of a polynomial of the form $A_1(Z)$ at some points in $\mgroup$, find the constant term of $A_1$.} The number of higher order evaluations determines the number of servers, and it seems reasonable to hope that we will need fewer higher order evaluations to find the constant term (because we get more information from a higher order evaluation).

This brings us to the final ingredient. We want a small set of points in $\mgroup$ which suffice for the above higher order interpolation problem. We show how to get unusually small such sets from unusually small sets of points which suffice for the original interpolation problem -- namely, from sparse $S$-decoding polynomials. This is again done using some algebraic insights; most crucially that the order $p$ evaluation of $A_1(Z)$ at the point $b$ is completely determined by $A_1(Z) \mod (Z-b)^p$,
and using the identity $(Z-b)^p = Z^p -b^p$ gives us concrete handle on $A_1(Z) \mod (Z-b)^p$.

\section{Main Result}

In order to state our main theorem, we need two quick definitions to set up the infrastructure of the SMVF+SSD framework.

\begin{definition}[Canonical set]
\label{def:canonicalset}
Let 
$m \in \Z$
be a positive integer. We define the canonical set for 
$m$
to be the following set:
$$S = \{x \in \Z_m \ s.t. \ x^2 \equiv x \ mod \ m\}$$
\end{definition}

\begin{definition}[$S$-decoding polynomial]
Let $S \subseteq \Z_m$ be a set containing $0$,
Let $\F$ be a field containing an element $\gamma_m$
which is a primitive $m$'th root of $1$: namely
$\gamma_m^m = 1$ 
and 
$\gamma_m^i \neq 1$
for 
$i = 1, 2,..., m - 1 $.

A polynomial 
$P(Z) \in \F[Z]$
is called an 
$S$-decoding polynomial if the following conditions hold:
\begin{itemize}
    \item $\forall s \in S \setminus \{0\}: \ P(\gamma_m^s) = 0$
    \item $P(\gamma_m^0) = P(1) = 1$
\end{itemize}
\end{definition}

\begin{theorem}[Main Theorem]
\label{thm:main}
    Let $m$ be a positive integer with $r$ distinct prime factors, and let $p$ be a prime not dividing $m$.

    Suppose $\F$ is a field of characteristic $p$ such that, for $S^* \subseteq \Z_m$ being the canonical set for $m$, there is an $S^*$-decoding polynomial over $\F$ with sparsity $\leq t$.

    Then there is a $t$-server PIR scheme with communication $\exp(\Otilde( ( \log n)^{\frac{1}{r+1}}))$.
\end{theorem}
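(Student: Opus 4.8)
The plan is to run the SMVF+SSD scheme of the overview over $\Z_M$ with $M=mp$, and to pay for the extra prime factor of $M$ with derivatives of order $<p$. First I would record the structure of the canonical set $S_M$ of $\Z_M$: since $p\nmid m$, the CRT gives $\Z_M\cong\Z_m\times\Z_p$, the idempotents of $\Z_p$ are only $0,1$, so $S_M\cong S^*\times\{0,1\}$ and $M$ has exactly $r+1$ distinct primes. By the Grolmusz matching-vector construction \cite{Gro} there is an $S_M$-matching vector family $(u_i,v_i)_{i=1}^n$ in $\Z_M^k$ with $k=\exp(\Otilde((\log n)^{1/(r+1)}))$. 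I encode the database as $F(\vecX)=\sum_i a_i\vecX^{u_i}\in\F[X_1,\dots,X_k]$, viewing the $u_i$ as integer vectors. To read $a_\tau$, the user picks $\bm\beta\in\mgroup^k$ uniformly, sets $C(Z)=(\beta_1Z^{v_{\tau,1}},\dots,\beta_kZ^{v_{\tau,k}})$, and asks server $l$ for all Hasse partial derivatives of $F$ of total order $<p$ at $C(b_l)$, for query points $b_l\in\mgroup$ to be chosen; since each $C(b_l)$ is uniform in $\mgroup^k$ independent of $\tau$ and the servers do not collude, privacy is immediate.

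Next I would reduce decoding to a higher-order interpolation problem. Let $A(Z)=F(C(Z))=\sum_i a_i\bm\beta^{u_i}Z^{\langle u_i,v_\tau\rangle}$ and $A_1(Z)=A(Z)\bmod(Z^M-1)$; by the matching-vector property $A_1(Z)=a_\tau\bm\beta^{u_\tau}+\sum_{s\in S_M\setminus\{0\}}c_sZ^s$, so it suffices to extract the constant term $c_0=a_\tau\bm\beta^{u_\tau}$ and divide by the known nonzero $\bm\beta^{u_\tau}$. In characteristic $p$, $Z^M-1=(Z^m-1)^p$ and $Z^m-1$ is separable with root set $\mgroup$, so $Z^M-1$ vanishes to order exactly $p$ at each $b\in\mgroup$; hence $A\equiv A_1\pmod{(Z-b)^p}$. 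By the Hasse chain rule applied to $C$, server $l$'s answer lets the user compute the order-$(<p)$ Taylor data of $A$ at $b_l$, i.e.\ $A\bmod(Z-b_l)^p=A_1\bmod(Z-b_l)^p$. Working with Hasse rather than ordinary derivatives is the one spot that needs care, since in characteristic $p$ ordinary derivatives of order below $p$ can vanish spuriously.

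The crux, which I expect to be the main obstacle, is to see that this higher-order data can be decoded by the \emph{same} $S^*$-decoding polynomial for $m$ supplied by hypothesis. Fix $b\in\mgroup$. Since $(Z-b)^p=Z^p-b^p$, writing $s=pq_s+\epsilon_s$ with $0\le\epsilon_s<p$ gives $Z^s\equiv b^{pq_s}Z^{\epsilon_s}\pmod{Z^p-b^p}$; but $\epsilon_s=s\bmod p\in\{0,1\}$ because $s\in S_M$ is idempotent, so $A_1\bmod(Z-b)^p$ is a \emph{linear} polynomial $c^{(0)}(b)+c^{(1)}(b)Z$. For the exponents $s$ with $p\mid s$ one has $b^{pq_s}=b^s=b^{\,s\bmod m}$ (as $b\in\mgroup$), and $s\mapsto s\bmod m$ is a bijection from $\{s\in S_M:p\mid s\}$ onto $S^*$ sending $0$ to $0$; collecting these coefficients as $\tilde c_\sigma$ ($\sigma\in S^*$),
\[
c^{(0)}(b)=\sum_{\sigma\in S^*}\tilde c_\sigma\,b^{\sigma},\qquad\tilde c_0=c_0=a_\tau\bm\beta^{u_\tau}.
\]
So $c^{(0)}(\cdot)$, as a function on $\mgroup$, has exactly the shape of a vanilla $S^*$-indexed polynomial over $\Z_m$ — precisely what an $S^*$-decoding polynomial handles; the derivatives have absorbed the passage from $m$ to $M$.

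Finally I would invoke the hypothesis: let $P(Z)=\sum_{l=1}^{t'}\lambda_lZ^{e_l}$ be an $S^*$-decoding polynomial over $\F$ of sparsity $t'\le t$, WLOG with $e_l\in\{0,\dots,m-1\}$, and take $b_l=\mgen^{e_l}$ as the $l$-th query point. Extracting $c^{(0)}(\mgen^{e_l})$ from the $t'$ answers and forming the $\F$-linear combination with the $\lambda_l$ gives
\[
\sum_{l=1}^{t'}\lambda_l\,c^{(0)}(\mgen^{e_l})=\sum_{\sigma\in S^*}\tilde c_\sigma\,P(\mgen^{\sigma})=\tilde c_0=a_\tau\bm\beta^{u_\tau},
\]
using $P(\mgen^{\sigma})=0$ for $\sigma\in S^*\setminus\{0\}$ and $P(1)=1$; dividing out $\bm\beta^{u_\tau}$ recovers $a_\tau$. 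This uses $t'\le t$ servers, hence is a $t$-server scheme. Each query is a point of $\F^k$ and each answer is the $\binom{k+p-1}{p-1}=O(k^{p-1})$ Hasse partials of $F$ of order $<p$; as $p$, $t$, and $|\F|$ are constants (one may take $\F=\F_{p^{\mathrm{ord}_m(p)}}$), the total communication is $O(t\cdot k^{p-1}\log|\F|)=\exp(\Otilde((\log n)^{1/(r+1)}))$.
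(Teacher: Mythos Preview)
Your proof is correct and follows essentially the same route as the paper: both set $M=mp$, invoke Grolmusz's $S_M$-matching vectors over $\Z_M$ (with $S_M\cong S^*\times\{0,1\}$ via CRT), reduce $A_1(Z)$ modulo $(Z-b)^p=Z^p-b^p$ to see that the constant term $c^{(0)}(b)$ is an $S^*$-supported polynomial in $b\in\mgroup$, and then apply the given $S^*$-decoding polynomial to recover $c_0$. The only difference is that the paper, having noted (as you did) that $A_1\bmod(Z-b)^p$ is linear in $Z$, asks the servers for derivatives of order $<2$ rather than $<p$, giving per-server communication $O(k)$ instead of your $O(k^{p-1})$; since $p$ is a constant this does not affect the stated $\exp(\Otilde((\log n)^{1/(r+1)}))$ bound, but it is a free sharpening you could incorporate.
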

The proof appears in Section~\ref{sec:proofs-of-main-thms}.

For comparison, the main result of Efremenko concerning PIRs is below: it takes the exact same hypothesis and produces a PIR scheme with quasipolynomially larger communication.

\begin{theorem}[\cite{Efremenko}]
    Let $m$ be a positive integer with $r$ distinct prime factors, and let $p$ be a prime not dividing $m$.

    Suppose $\F$ is a field of characteristic $p$ such that, for $S^* \subseteq \Z_m$ being the canonical set for $m$, there is an $S^*$-decoding polynomial over $\F$ with sparsity $\leq t$.

%    Suppose $m, p$ have the property that, for $S^* \subseteq \Z_m$ being the canonical set for $m$, there is an $S^*$-decoding polynomial over $\F_p$ with sparsity $\leq t$.

    Then there is a $t$-server PIR scheme with communication $\exp(\Otilde( ( \log n)^{\frac{1}{r}}))$.
\end{theorem}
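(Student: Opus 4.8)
The plan is to follow the SMVF+SSD template sketched in the overview, but \emph{without} any derivatives, since the target is merely Efremenko's bound $\exp(\Otilde((\log n)^{1/r}))$. First, I would set up the matching vector family: let $m$ have $r$ distinct prime factors, and take the canonical set $S^* = \{x \in \Z_m : x^2 \equiv x \bmod m\}$, which by CRT has size exactly $2^r$. The Grolmusz construction (from~\cite{Gro,BBR}) gives, for this $S^*$, an $S^*$-matching vector family $(u_1,v_1),\ldots,(u_n,v_n)$ in $\Z_m^k \times \Z_m^k$ of size $n$ with $k = \exp(\Otilde((\log n)^{1/r}))$ — this is the standard quantitative input and I would cite it as a known fact. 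Concretely one needs $k$ roughly $\binom{h}{\le r}$ over $\Z_m$ with $m^h \approx$ (something), tuned so that $n$ vectors are produced with $k$ subpolynomial; the exact bookkeeping is routine and identical to Efremenko's.

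Next, the encoding and query. Pick $\F$ of characteristic $p$ (coprime to $m$) containing the group $\mgroup$ of all $m$-th roots of unity — for instance $\F = \F_{p^e}$ with $e$ the multiplicative order of $p$ mod $m$. Encode the database as $F(\mathbf X) = \sum_{i=1}^n a_i \mathbf X^{u_i} \in \F[X_1,\ldots,X_k]$, treating the $u_i$ as vectors in $\mathbb{N}^k$. To retrieve $a_\tau$, the user samples $\bm\beta \in \mgroup^k$ uniformly and forms the monomial curve $C(Z) = (\beta_1 Z^{v_{\tau,1}},\ldots,\beta_k Z^{v_{\tau,k}})$. Server $j$, when queried at a point $h_j \in \mgroup$, returns $F(C(h_j))$. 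Privacy is immediate: for each fixed $h \in \mgroup$, the point $C(h) = (\beta_1 h^{v_{\tau,1}},\ldots,\beta_k h^{v_{\tau,k}})$ is uniform in $\mgroup^k$ because $\bm\beta$ is, so no single server's query reveals anything about $\tau$.

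For correctness, observe $A(Z) := F(C(Z)) = \sum_i a_i \bm\beta^{u_i} Z^{\langle u_i, v_\tau\rangle}$ where the exponents are integer inner products. Reducing mod $Z^m - 1$ collapses exponents mod $m$, and the matching vector property ($\langle u_i,v_\tau\rangle = 0$ iff $i=\tau$, else in $S^*\setminus\{0\}$) yields
\[
A_1(Z) := A(Z) \bmod (Z^m-1) = a_\tau \bm\beta^{u_\tau} + \sum_{s \in S^*\setminus\{0\}} c_s Z^s .
\]
Since $Z^m - 1$ vanishes on all of $\mgroup$, $A_1(h) = A(h) = F(C(h))$ for every $h \in \mgroup$, so the servers' answers give the user the value of $A_1$ at any chosen points of $\mgroup$. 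Now let $P(Z)$ be the given $S^*$-decoding polynomial over $\F$ with at most $t$ monomials: $P(\gamma_m^s) = 0$ for $s \in S^*\setminus\{0\}$ and $P(1) = 1$, where $\gamma_m$ is a primitive $m$-th root of unity in $\F$. I would argue that $\sum_{s \in S^*} c_s P(\gamma_m^s) \cdot (\text{appropriate twist})$ isolates the constant term; more precisely, write $P(Z) = \sum_{\ell=1}^t \lambda_\ell Z^{e_\ell}$ and note that $\frac{1}{m}\sum_{h \in \mgroup} h^{-a} A_1(\gamma_m^? h)$-type sums combined with the vanishing of $P$ let the user recover $a_\tau \bm\beta^{u_\tau}$ from exactly the $t$ evaluations $A_1(\gamma_m^{e_1}),\ldots,A_1(\gamma_m^{e_t})$ — one per server. (The cleanest phrasing: the linear functional $g \mapsto \sum_\ell \lambda_\ell g(\gamma_m^{e_\ell})$ applied to a polynomial supported on $S^*$ equals the coefficient of $Z^0$, by the defining properties of $P$; apply it to $A_1$.) Finally $a_\tau = (\bm\beta^{u_\tau})^{-1} \cdot (\text{recovered value})$, and $\bm\beta^{u_\tau}$ is known to the user. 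Total communication is $t$ servers each sending one $\F$-element, plus the query points, so $O(t \log|\F|) = \exp(\Otilde((\log n)^{1/r}))$ since $k$ and hence $|\F|$ and the exponent sizes are all subpolynomial with this exponent.

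\textbf{Main obstacle.} The conceptual content is entirely in (i) quoting the right quantitative Grolmusz/BBR matching-vector bound so that $k = \exp(\Otilde((\log n)^{1/r}))$ — this is where the $r$ prime factors are used — and (ii) verifying that the $S^*$-decoding polynomial, which is defined purely in terms of prescribed values at $\{\gamma_m^s : s \in S^*\}$, genuinely implements the linear functional ``extract the constant term of a polynomial supported on $S^*$.'' Step (ii) is a short linear-algebra check but must be done carefully because $A_1$ is supported on $S^*$ (not on $\{0,\ldots,m-1\}$) and one must confirm the functional $g \mapsto \sum_\ell \lambda_\ell g(\gamma_m^{e_\ell})$ annihilates $Z^s$ for all nonzero $s \in S^*$ and sends $Z^0 \mapsto 1$ — which is exactly the two bullet conditions in the definition of $S^*$-decoding polynomial, so it goes through cleanly. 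Everything else is routine bookkeeping following Efremenko.
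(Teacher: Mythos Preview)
The paper does not actually prove this theorem: it is stated (with attribution to Efremenko) purely as a point of comparison to the main result, and no proof appears. Your reconstruction nonetheless follows exactly the SMVF+SSD template that the paper's overview (Section~1.1) lays out for the Efremenko scheme, and the key step --- that a sparse $S^*$-decoding polynomial $P(Z)=\sum_\ell \lambda_\ell Z^{e_\ell}$ furnishes the linear functional $g\mapsto \sum_\ell \lambda_\ell g(\gamma_m^{e_\ell})$ recovering the constant term of any polynomial supported on $S^*$ --- is precisely the forward direction of the paper's Lemma~\ref{lem:polytointerpolate}. So your argument is correct and aligned with the paper's framework.

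Two small cleanups. First, the detour through ``$\frac{1}{m}\sum_{h\in \mgroup} h^{-a} A_1(\gamma_m^{?} h)$-type sums'' is unnecessary and muddles the exposition; the ``cleanest phrasing'' you give immediately after is already the full argument, so just keep that. Second, your communication accounting is slightly off: the dominant cost is not the $t$ answers (each a single $\F$-element) but the $t$ \emph{queries}, each a point in $\mgroup^k\subseteq \F^k$, costing $O(k\log|\F|)$ bits; since $|\F|$ depends only on the constants $m,p$ (not on $k$ or $n$), the total is $O(k)$, which is indeed $\exp(\Otilde((\log n)^{1/r}))$.
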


Instantiating Theorem~\ref{thm:main} with arbitrary $m$ and $p$, and using the fact that every $S$ has a trivial $S$-decoding polynomial over $\F_p$ of sparsity $|S|$, we recover the main result of Dvir and Gopi giving $2^{r}$-server PIR schemes with communication $\exp(\Otilde( (\log n)^{\frac{1}{r+1}}))$ for all $r \geq 1$.

Instantiating Theorem~\ref{thm:main} with known $m,p$ for which unusually sparse $S^*$-decoding polynomials are known~\cite{Efremenko,IS,Cheeetal}, we get new PIR schemes better than previously known. In particular, we get a $3$-server PIR scheme with $\exp(\Otilde((\log n)^{1/3}))$ communication, improving upon the $\exp(\Otilde((\log n)^{1/2}))$ communication for $3$-server PIR first proved by Efremenko~\cite{Efremenko}.

For larger $t$, this improves the state of the art communication for $t$-server PIR for all $t$ except\footnote{For these exceptional $t$, the PIR scheme of Theorem~\ref{thm:main} matches the communication of the Dvir-Gopi scheme, while being simpler to describe and analyze.} $t \in \{2 \} \cup \{ 4, 5, \ldots, 8 \} \cup \{ 16, 17, \ldots, 23 \}$ . Formally, the parameters we get for $t$-server PIR using the best known $S$-decoding polynomials from~\cite{Cheeetal} are given by the following corollary.
\begin{corollary}
\label{cor:result}
    For each $r \geq 2$, there is a $t$-server PIR with communication
    $\exp(\Otilde(\log n)^{\frac{1}{r+1}})$ according to the following table:
    \begin{center}
\begin{tabular}{ |c|c| } 
 \hline
 r & t  \\ 
 \hline
 $r = 1$ & 2\\
 $r$ even, $r \leq 102$ & $\left(\sqrt{3}\right)^r$ \\ 
 $r$ odd, $3 \leq r\leq 103$  & $8 \cdot \left(\sqrt{3} \right)^{r-3} $ \\ 
 $r \geq 104$  &  $\left( 3/4 \right)^{51} \cdot 2^r$ \\ 
 \hline
\end{tabular}
\end{center}
\end{corollary}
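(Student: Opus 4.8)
The plan is to obtain the corollary as a bookkeeping consequence of Theorem~\ref{thm:main} together with the known explicit sparse $S^*$-decoding polynomials of \cite{Efremenko, IS, Cheeetal}. By Theorem~\ref{thm:main}, to realize a $t$-server PIR with communication $\exp(\Otilde((\log n)^{1/(r+1)}))$ it suffices to produce a single integer $m$ with exactly $r$ distinct prime factors, a prime $p\nmid m$, and (over a field $\F$ of characteristic $p$ containing a primitive $m$-th root of unity $\gamma_m$) a polynomial $P\in\F[Z]$ with at most $t$ monomials satisfying $P(1)=1$ and $P(\gamma_m^s)=0$ for every $s$ in the canonical set $S^*$ of $\Z_m$ other than $0$. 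So everything reduces to: for each $r$, exhibit such an $(m,p,\F,P)$ with $t$ at most the tabulated value.

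First I would record a multiplicativity lemma for canonical $S^*$-decoding polynomials: if $m_1,m_2$ are coprime, $p\nmid m_1m_2$, and over $\F$ of characteristic $p$ there are canonical $S_1^*$- and $S_2^*$-decoding polynomials of sparsities $t_1$ and $t_2$, then over a suitable extension of $\F$ there is a canonical $S^*$-decoding polynomial for $m=m_1m_2$ of sparsity $t_1t_2$. This is the CRT isomorphism $\Z_m\cong\Z_{m_1}\times\Z_{m_2}$, under which $S^*=S_1^*\times S_2^*$ and a primitive $m$-th root can be written as $\gamma_m=\gamma_{m_1}^{a_1}\gamma_{m_2}^{a_2}$ for the idempotents $a_1,a_2$ lifting $1$ in each coordinate; substituting $Z\mapsto Z^{a_1}$ into $P_1$, $Z\mapsto Z^{a_2}$ into $P_2$, and multiplying produces the claimed polynomial, its monomial count being the product. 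I would also note the two base blocks: (i) when $m$ is a prime power, $S^*=\{0,1\}$ and $P(Z)=(Z-\gamma_m)/(1-\gamma_m)$ is a canonical decoding polynomial of sparsity $2$ over any $\F_p$ with $p\nmid m$ --- this is the $r=1$ row and, via the lemma alone, recovers the Dvir-Gopi $2^r$-server bound; and (ii) the improved blocks of \cite{Efremenko, IS, Cheeetal}, which supply moduli with two distinct prime factors (with a compatible characteristic) whose canonical $S^*$-decoding polynomials have sparsity $3$ --- the source of the ``$\sqrt3$ per prime''.

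For each $r$ one then writes $m$ as a coprime product of base blocks minimizing the product of their sparsities subject to the total prime count being $r$, and applies Theorem~\ref{thm:main}. Since a sparsity-$3$ two-prime block contributes a factor $\sqrt3$ per prime while a prime-power block contributes $2$, one uses as many improved blocks as the constructions of \cite{Cheeetal} allow; the available supply is finite (sufficient to cover $102$ primes), and together with compatibility constraints among the improved blocks and the best available block of odd prime count this produces exactly three regimes: pairing all primes into improved blocks for even $r\le102$ gives $t=3^{r/2}=(\sqrt3)^r$; absorbing the unpaired prime for odd $r$ into a best small block yields the $8\cdot(\sqrt3)^{r-3}$ shape; and once $r$ outstrips the supply one retains the $51$ improved blocks and fills the remaining $r-102$ primes with prime-power blocks, giving $3^{51}\cdot2^{\,r-102}=(3/4)^{51}2^r$. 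Verifying each row of the table is then just checking the arithmetic of these products against the concrete parameters in \cite{Cheeetal}.

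The main obstacle I anticipate is entirely on the input side. One part is pinning down the multiplicativity lemma so that the polynomial obtained by substitution and multiplication genuinely vanishes on all of $\{\gamma_m^s:s\in S^*\setminus\{0\}\}$ rather than on a coarser set --- which hinges on choosing the idempotent exponents $a_1,a_2$ and the root-of-unity identification correctly, and on the factor fields being compatible. The other is importing the quantitative content of \cite{Cheeetal} accurately: which moduli and characteristics give sparsity $3$, how many mutually coprime such blocks can be taken over one characteristic, and what the best low-prime-count odd block is, so that the constants $3$ and $8$ and the cutoffs $102/103$ and the factor $(3/4)^{51}$ emerge exactly. Beyond this accounting, no new idea is needed once Theorem~\ref{thm:main} is in hand.
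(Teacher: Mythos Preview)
Your plan is correct in substance---apply Theorem~\ref{thm:main} to the sparse canonical $S^*$-decoding polynomials supplied by \cite{Cheeetal}---and that is exactly what the paper does. But the paper's proof is two lines, because it has already recorded the exact monomial counts from \cite{Cheeetal} as a black-box preliminary (Theorem~\ref{thm:decoding-poly}): for each $r$ there is an $m$ that is a product of $r$ distinct odd primes and an $S^*$-decoding polynomial over characteristic $2$ with sparsity precisely the tabulated $t$. The corollary then follows by taking $p=2$ in Theorem~\ref{thm:main}. Nothing more.

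By contrast, you are proposing to \emph{re-derive} Theorem~\ref{thm:decoding-poly}: a CRT multiplicativity lemma, the sparsity-$2$ single-prime blocks, the sparsity-$3$ two-prime blocks, and then an optimization over how to partition $r$ primes into blocks. That is essentially the internal structure of the proof in \cite{Cheeetal}, not of the corollary here. It is correct in outline, but it is both unnecessary and risky: your own odd-$r$ accounting (``absorb the unpaired prime into a best small block'') does not obviously reproduce the $8\cdot(\sqrt3)^{r-3}$ row without further constraints, since naive multiplicativity would suggest $2\cdot 3^{(r-1)/2}$ is available and smaller. The resolution lies in coprimality and characteristic constraints among the concrete moduli in \cite{Cheeetal}, which you correctly flag as the main obstacle---but the paper sidesteps all of this by citing the finished table. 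For this corollary, you should do the same: invoke Theorem~\ref{thm:decoding-poly} and then Theorem~\ref{thm:main} with $p=2$.
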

The proof appears in Section~\ref{sec:proofs-of-main-thms}.

\section{Preliminaries}
We use
$\F_q$
to denote a finite field of
$q$
elements and
$\Z_m$
to denote the ring of integers modulo
$m$.
The inner product between two vectors
$u = (u_1,..., u_k), v = (v_1,..., v_k)$
is denoted by
$\langle u, v \rangle = \sum_{i = 1}^k u_i \cdot v_i$.

%and it is an element of 
%$\Z_m$
% I removed this, because sometimes we treat it as an integer.
%.

The pointwise product, $\odot$, of two vectors
$u = (u_1, \ldots, u_k)$ and $v = (v_1, \ldots, v_k)$,
denoted $u \odot v$ is
the vector $(u_1 \cdot v_1, u_2 \cdot v_2, \ldots, u_k \cdot v_k)$.

\begin{definition}[$S$-Matching Vector Family]
Let
$S \subset \Z_m, 0 \in S$
and
$\Fam = \left( \U, \V \right)$
where 
$\U = \left(u_1,..., u_n \right), \V = \left(u_1,..., u_n \right)$ with  $u_i, v_i \in \Z_m^k$.
Then 
$\Fam$
is said to be an 
$S$-matching 
vector family of size 
$n$
and dimension
$k$
if the following conditions hold:
\begin{itemize}
    \item $\langle u_i, v_i \rangle = 0$ for every $i \in [n]$
    \item $\langle u_i, v_j \rangle \in S \setminus \{0\}$ for every $i \neq j$
\end{itemize}
\end{definition}

\begin{remark}
We changed the definition to have $0 \in S$ (traditionally $0$ is not included), as we felt it made things more convenient notationally.

In particular, our definition makes all values of
$\langle u_i, v_j \rangle$, where $i,j \in [n]$,
lie in $S$.
\end{remark}

%From Dvir-Gopi
\begin{theorem}[Large Matching Vector Families, Theorem 1.4 in \cite{Gro}]
\label{thm:MVF}
Let 
$m = p_1p_2...p_r$ where $p_1, p_2,..., p_r$ are distinct
primes and $r \geq 2$. Then, there exists an explicitly constructible $S$-matching vector family 
$\Fam$ 
in
$\Z_m^k$
of size 
$n \geq exp\left( \Omega\left(\frac{(log \ k)^r}{(log \ log \ k)^{r - 1}} \right)  \right)$
where 
$S$
is the canonical set for $m$.
\end{theorem}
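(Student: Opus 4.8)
\noindent\textbf{Proof proposal for Theorem~\ref{thm:MVF}.}
The plan is to reduce the construction to two more elementary ingredients --- a uniform set system and a low-degree polynomial --- and then to invoke the Barrington-Beigel-Rudich degree reduction for symmetric predicates modulo $m$~\cite{BBR}, which is the source of both the superpolynomial size and the exponent $r$.

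First I would record the standard \emph{tensoring} (or monomialization) reduction. Suppose we are given subsets $B_1, \dots, B_n \subseteq [\ell]$, all of a common size $w$, together with a polynomial $P(z) = \sum_{j=0}^{d} c_j \binom{z}{j}$ with integer coefficients $c_j$ and degree $d$, whose reduction $\bar P$ modulo $m$ satisfies $\bar P(w) = 0$ in $\Z_m$ and $\bar P(t) \in S \setminus \{0\}$ for every $t \in \{0,1,\dots,w-1\}$, where $S$ is the canonical (idempotent) set for $m$. Index a coordinate set by the subsets $T \subseteq [\ell]$ with $|T| \le d$, and for each $i$ put $(u_i)_T = \mathbf 1[T \subseteq B_i]$ and $(v_i)_T = c_{|T|}\cdot\mathbf 1[T \subseteq B_i]$, all reduced mod $m$. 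Since $\mathbf 1[T\subseteq B_i]\cdot\mathbf 1[T\subseteq B_j] = \mathbf 1[T\subseteq B_i\cap B_j]$ and $\sum_{|T|\le d} c_{|T|}\,\mathbf 1[T\subseteq B] = \sum_{j=0}^d c_j\binom{|B|}{j} = P(|B|)$, one computes $\langle u_i, v_j\rangle \equiv P(|B_i\cap B_j|) \pmod m$. Hence $\langle u_i, v_i\rangle = \bar P(w) = 0$, while for $i\ne j$ one has $|B_i\cap B_j| = t$ for some $t \in \{0,\dots,w-1\}$, so $\langle u_i, v_j\rangle = \bar P(t) \in S\setminus\{0\}$. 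This is an $S$-matching vector family of size $n$ and dimension $k = \sum_{j\le d}\binom{\ell}{j} \le (\ell+1)^d$.

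It remains to supply the two ingredients. For the set system I would take a ground set partitioned into $g$ equal blocks and let $B_1,\dots,B_n$ range over the subsets having a fixed profile across the blocks, so that $n$ is roughly $\binom{\ell}{w}$ and the intersection size $|B_i\cap B_j|$ decomposes into blockwise intersection sizes. The delicate ingredient is the polynomial $P$: by the Chinese Remainder Theorem, forcing $\bar P$ into the idempotent set $S$ amounts to requiring, for each prime $p_i\mid m$ separately, that $P\bmod p_i$ be $\{0,1\}$-valued on the relevant intersection values and vanish at the diagonal value $w$, with the additional global requirement that for $t<w$ the residues $P(t)\bmod p_i$ are not all $0$. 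A univariate polynomial of the intersection \emph{size} alone would need degree $\Theta(w)$; but the Barrington-Beigel-Rudich low-degree representation of an $\mathrm{OR}$-type symmetric function modulo a number with $r$ distinct prime factors, applied block by block to the intersection \emph{pattern}, yields such a $P$ of degree only $d = \ell^{O(1/r)}$. Feeding this into the count gives $\log k = O(d\log\ell) = \ell^{O(1/r)}$, so $\ell$ has order $(\log k)^r$ up to factors polynomial in $\log\log k$; balancing $w$ and $g$ appropriately makes $\log n = \Theta(\ell)$ and yields $\log n = \Omega\!\big((\log k)^r/(\log\log k)^{r-1}\big)$. Explicitness is automatic, since both the structured design and the BBR polynomial are explicit.

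The step I expect to be the real obstacle is building the degree-$\ell^{O(1/r)}$ polynomial $P$ with outputs confined to the \emph{canonical} set $S$ and with diagonal value exactly $0$, as opposed to confined merely to some set of $2^r$ residues: here the BBR construction, the choice of block sizes, and the CRT bookkeeping all have to be combined with care. Granting such a $P$, the tensoring reduction and the parameter balancing are straightforward.
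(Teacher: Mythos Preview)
The paper does not give its own proof of Theorem~\ref{thm:MVF}: it is quoted verbatim as Theorem~1.4 of Grolmusz~\cite{Gro} and used as a black box in the proof of Theorem~\ref{thm:main}. So there is nothing in the paper to compare your proposal against.

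That said, your outline is indeed the standard route to Grolmusz's bound, and the high-level parameter arithmetic is right. One point of internal inconsistency is worth flagging. Your tensoring reduction is stated for a \emph{univariate} polynomial $P$ in the single variable $|B_i\cap B_j|$, and you correctly observe that such a $P$ would need degree $\Theta(w)$, which is useless. You then say the BBR construction, ``applied block by block to the intersection pattern'', furnishes a $P$ of degree $\ell^{O(1/r)}$ --- but that object is no longer a univariate polynomial in the intersection size; it is a multivariate polynomial in the coordinatewise intersection indicators (equivalently, a polynomial in the $\ell$ Boolean variables $x_s = \mathbf 1[s\in B_i]\cdot\mathbf 1[s\in B_j]$). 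Your tensoring reduction as written (with $(u_i)_T=\mathbf 1[T\subseteq B_i]$ and $(v_i)_T=c_{|T|}\mathbf 1[T\subseteq B_i]$) only accommodates the univariate case, since the coefficient $c_{|T|}$ depends on $T$ only through $|T|$. The fix is easy --- index coordinates by degree-$\le d$ monomials $T$ and put the actual multilinear coefficient $c_T$ of the BBR polynomial on the $v$-side --- but as stated, the reduction and the polynomial you invoke do not match. Once that is corrected, the rest of your sketch (block structure, CRT to land in the canonical set, parameter balancing) is the right shape.
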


%Previous results: Theorem 4.1 of chee et al
\begin{theorem}[Sparse $S$-Decoding Polynomials, Theorem 4.1 in \cite{Cheeetal}]
\label{thm:decoding-poly}
Let
$r$
be a positive integer. Then, there exists an integer 
$m = p_1...p_r$
, product of 
$r$
distinct odd primes, such that the 
$S$-decoding polynomial for 
$m$
has the following number of monomials where  
$S$ 
is the canonical set for
$m$
, and the corresponding field has characteristic
$2$.
\begin{itemize}
    \item $r = 1$: \\
         number of monomials = $2$
    \item $2 \leq r \leq 103$: \\
        number of monomials = $\begin{cases}
        (\sqrt{3})^r & \text{if $r$ is even} \\
        8 \cdot (\sqrt{3})^{r - 3} & \text{if $r$ is odd}
          \end{cases}$
    \item $r \geq 104$:\\
        number of monomials = $(\frac{3}{4})^{51}\cdot2^r$
\end{itemize}
\end{theorem}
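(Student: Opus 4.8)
The plan is to recover the three ingredients behind the construction of \cite{Cheeetal}: a composition (tensoring) lemma for $S$-decoding polynomials, a short list of hand-made base blocks, and a combinatorial optimization that assembles them. First I would prove the composition lemma. If $m_1,m_2$ are coprime with canonical sets $S_1\subseteq\Z_{m_1}$ and $S_2\subseteq\Z_{m_2}$, then under the ring isomorphism $\Z_{m_1m_2}\cong\Z_{m_1}\times\Z_{m_2}$ the canonical set of $m_1m_2$ is exactly $S_1\times S_2$, since the idempotents of a product ring are pairs of idempotents. Choosing a primitive $(m_1m_2)$th root $\gamma=\gamma_1\gamma_2$, where $\gamma_i$ is a primitive $m_i$th root with respect to which $P_i$ is (w.l.o.g.) an $S_i$-decoding polynomial of sparsity $t_i$, and integers $g_1,g_2$ with $g_i\equiv 1\pmod{m_i}$ and $g_i\equiv 0\pmod{m_{3-i}}$, I would verify that $Q(Z)=P_1(Z^{g_1})\cdot P_2(Z^{g_2})$ is an $S$-decoding polynomial: for $s\in S$ one has $\gamma^s=\gamma_1^{s_1}\gamma_2^{s_2}$ and $(\gamma^s)^{g_i}=\gamma_i^{s_i}$, hence $Q(\gamma^s)=P_1(\gamma_1^{s_1})P_2(\gamma_2^{s_2})=[s_1=0]\cdot[s_2=0]=[s=0]$, while $Q(1)=P_1(1)P_2(1)=1$. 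A one-line argument that the exponents $a g_1+b g_2\bmod m_1m_2$, for $a\in\supp P_1$ and $b\in\supp P_2$, are pairwise distinct (reduce modulo $m_1$ and modulo $m_2$) shows no monomials collide or cancel, so $Q$ has sparsity exactly $t_1t_2$. This reduces the theorem to exhibiting good base blocks over characteristic-$2$ fields and then optimizing.

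Next I would assemble the base blocks. The trivial ones are: a single prime $p$ admits a two-term $S$-decoding polynomial (the two-point interpolant forcing value $0$ at $\gamma_p$ and $1$ at $1$), and an arbitrary $m$ with $r$ prime factors admits a dense Lagrange-style one of sparsity $\le|S|=2^r$. The crucial non-trivial block is Efremenko's sparsity-$3$ $S$-decoding polynomial for a product $m=p_1p_2$ of two primes: here one picks the three exponents so that reducing modulo each $p_i$ (that is, evaluating at the $p_i$th root $\gamma^{e_i}$) makes two of the three monomials coincide as functions, leaving a two-term expression whose single free coefficient can be forced to vanish, while the normalization $P(1)=1$ absorbs another degree of freedom; checking that the resulting small linear system is solvable over the characteristic-$2$ field is precisely where number-theoretic conditions on $p_1,p_2$ and on $\mathrm{ord}_m(2)$ enter, and this is the main obstacle. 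The papers \cite{Efremenko,IS,Cheeetal} additionally supply a handful of explicit small constructions obtained by the same style of argument together with a search over admissible prime tuples (for instance the sparsity-$8$ block on three primes); these, together with the fact that only finitely many pairwise-disjoint Efremenko-type pairs are available ($51$ of them, using $102$ primes), are what pin down the cutoffs $r\le102$, $r\le103$, $r\ge104$.

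Finally I would run the optimization: using multiplicativity of sparsity under coprime composition, for each $r$ choose the cheapest way to write $r$ as a sum of block sizes over pairwise-disjoint prime sets. For even $r\le102$ this is $r/2$ Efremenko pairs, giving $3^{r/2}=(\sqrt3)^r$; for odd $3\le r\le103$ it is $(r-3)/2$ such pairs together with the sparsity-$8$ three-prime block, giving $8\cdot(\sqrt3)^{r-3}$; and for $r\ge104$, once the supply of disjoint Efremenko pairs is exhausted, it is all $51$ pairs padded with $r-102$ single-prime sparsity-$2$ blocks, giving $3^{51}\cdot 2^{r-102}=(3/4)^{51}2^r$. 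The $r=1$ entry is just the two-term block. The only substantive content beyond bookkeeping is the base-block construction and checking that the composition lemma respects the canonical sets; the remainder is the optimization just described, which is how \cite{Cheeetal} arrive at the stated table, so a fully detailed proof essentially reproduces their argument.
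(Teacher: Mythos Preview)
The paper does not prove this statement at all: it is quoted verbatim as Theorem~4.1 of \cite{Cheeetal} in the Preliminaries section and used as a black box when deriving Corollary~\ref{cor:result}. So there is no ``paper's own proof'' to compare your proposal against.

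That said, your sketch is a faithful outline of how the cited result is actually obtained in \cite{Efremenko,IS,Cheeetal}: the multiplicativity of sparsity under coprime composition (your tensoring lemma), the base blocks of sparsity $2$ (one prime), $3$ (the Efremenko pair $m=p_1p_2$), and $8$ (three primes), and then the greedy assembly that produces the table. The one place where you are slightly imprecise is the claim that exactly $51$ pairwise-disjoint Efremenko-type prime pairs exist; what \cite{Cheeetal} actually establish is that at least $51$ such pairs can be exhibited (via an explicit search for primes meeting the order conditions), which is all that is needed for the $r\ge 104$ row. If you were to write this up in full you would need to supply those explicit primes or reproduce the search, since the existence of each sparsity-$3$ block is not automatic and depends on delicate conditions on $\mathrm{ord}_m(2)$.
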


%Definition - Leibnitz, sum - mod property - A(<p) notation- chain-rule

%\snote{We should use $\mathbf X$ whenever we talk about a tuple of variables $X_1, \ldots, X_k$. Maybe using a macro.}

We now define (Hasse) derivatives of polynomials and recall some important properties. Let 
$\F$
be a field and 
$\F[\vecX] = \F[X_1,..., X_k]$
be the ring of polynomials. Also, for a vector 
$i = (i(1),..., i(k))$
of non-negative integers, denote its weight by
$wt(i) = \sum_{j = 1}^k i(j)$
and denote the monomial
$\prod_{j = 1}^k X_j^{i(j)}$
by
$\vecX^i$.

%\snote{This should be $k$-variate, not $n$-variate below}
%\snote{ Also shouldn't use $Z$, because that is our favourite univariate variable. Maybe $\mathbf Y$? }
\begin{definition}[Hasse derivatives]
 For a multivariate polynomial 
 $F(\vecX) \in \F[\vecX]$
 and a non-negative vector
 $i = (i(1),..., i(k)),$
we denote the
$i$-th
Hasse derivative of 
$F$
by
$F^{(i)}(\vecX)$
, and it is the coefficient of 
$\mathbf{Y}^{i}$
in the polynomial 
$F(\vecX + \mathbf{Y}) \in \F[\vecX, \mathbf{Y}]$
. Therefore, we have,
$$F(\vecX + \mathbf{Y}) = \sum_i F^{(i)}(\vecX)\mathbf{Y}^i$$
\end{definition}

We use the notation 
$F^{(<m)}(a)$
to be the vector of all the evaluations of 
$i$-th
Hasse derivative of 
$F$
at point
$a$
for all vectors 
$i$
such that
$wt(i) < m$.
We note that Hasse derivatives are just a simple scaling of standard iterated derivatives whenever the order of derivative is smaller than the characteristic of the field. In particular, for the univariate polynomial $A(Z) = Z^s$, we have a simple formula for the Hasse derivative: $A^{(j)}(Z) = {s \choose j} Z^{s-j}$.

We will need the following facts about Hasse derivatives.
\begin{itemize}
    \item Let $A(Z) \in \F[Z]$ be a univariate polynomial, and let $b \in \F$.
    Then knowing $A(Z) \mod (Z-b)^m$ is exactly equivalent to knowing $A^{(<m)}(b)$.
    
    Concretely, given
    $A^{(<m)}(b)$
    we can construct the polynomial
    $$A^*(Z) = \sum_{i = 0}^{m - 1} A^{(i)}(b) (Z - b)^i,$$
    which is easily seen to be the remainder of 
    $A(Z) \mod (Z - b)^m$. This connection goes in the other direction too.
    \item Chain rule: Suppose $F(X_1,..., X_k) \in \F[X_1,..., X_k]$ be a multivariate polynomial and $C = (P_1(Z),..., P_k(Z))$
    be a tuple of univariate polynomials in the single variable $Z$. Denote their composition by the polynomial
    $A(Z) = F(C(Z))$
    . Then, the $i$-th Hasse derivative of 
    $A$
    , 
    $A^{(i)}$
    , can be expressed as a polynomial in 
    $F^{(j)}(C(Z))$
    for 
    $wt(j) \leq i$
    and $P_t^{(j)}(Z)$
    for $1 \leq t \leq k, j \leq i$.
\end{itemize}

\begin{definition}[PIR Scheme]
A one-round t-server PIR protocol involves t servers, 
$\mathcal{S}_1,..., \mathcal{S}_t$, each holding the same 
$n$-bit 
database
$a = (a_1,..., a_n) \in \{0, 1\}^n$
, such that these 
$t$
servers do not communicate with each other,
and there is a user 
$\mathcal{U}$
who knows 
$n$
and wants to retrieve some bit 
$a_i$
for
$i \in [n]$
, without
revealing 
$i$. It consists of a randomized algorithm for the user, and
$t$
deterministic algorithms for the servers such that:
\begin{itemize}
    \item On input
    $i \in [n]$,
    $\mathcal{U}$ 
    obtains a random string $rand$ and produces 
    $k$ random queries 
    $q_1,..., q_t$
    and send them to the respective servers.
    \item Each server $j$ produces a response 
    $r_j = \mathcal{S}_j(a, q_j)$ such that $a$ is the database, and send it back to the user.
    \item The user based on the randomness, $i$, $r_1,..., r_t$ calculates $a_i$, the $i$ bit of the database.
\end{itemize}

The protocol should satisfy the following conditions:
\begin{itemize}
    \item $\mathbf{Correctness}: $ For any database $a$, index $i$, user should output the value of $a_i$ with probability $1$, where the probablity is over the random string $rand$.
    \item  $\mathbf{Privacy}: $ Each server individually learns nothing about $i$ i.e. for any fixed database $a$, and for any server, the distributions of 
    $q_j(i_1, rand), q_j(i_2, rand)$ for all $i_1, i_2 \in [n]$ are identical.
\end{itemize}
\end{definition}

\section{The $0$-interpolation property, with multiplicity}

In this section, we express the property of having a sparse $S$-decoding polynomial 
in terms of having a small interpolating set for a certain interpolation problem. We then define a multiplicity version of that interpolation problem that will be useful for the improved PIR scheme, and give a general way to get small interpolating sets for this high multiplicity interpolation problem from small interpolating sets for the original interpolation problem without multiplicities (and thus from sparse $S$-decoding polynomials).

\begin{definition}[$0$-interpolation property]
    Let $S \subseteq \mathbb N$ with $0 \in S$. We say $B \subseteq \F$ has the $0$-interpolating property for $S$ if there is a map $E: \F^B \to \F$,
    such that for every polynomial $R(Z)$ of the form
    $$ R(Z) = \sum_{s \in S} c_s Z^s,$$
    we have $E(R|_B) = R(0)$.
    In words, the evaluations of $R$ at the points of $B$ determine $R(0)$.
\end{definition}

\begin{definition}[$0$-interpolation property with multiplicity $e$]
    Let $S \subseteq \mathbb N$ with $0 \in S$. We say $B \subseteq \F$ has the $0$-interpolating property with multiplicity $e$ for $S$ if 
    there is a map $E : (\F^e)^B \to \F$ such that for every polynomial $R(Z)$ of the form
    $$ R(Z) = \sum_{s \in S} c_s Z^s,$$
    we have $E(R^{(<e)}|_B) = R(0)$. 
    In words, the order $e$ evaluations of $R$ at the points of $B$ determine $R(0)$.
\end{definition}

\begin{lemma}[Having sparse $S$-decoding polynomials is the same as having a small set with the $0$-interpolation property]
\label{lem:polytointerpolate}
    Suppose $p$ is a prime and $m$ is relatively prime to $p$. Let $\F$ be a field of characteristic $p$ containing all $m$ $m$th roots of $1$. Let $\mgroup$ denote this set of $m$th roots of $1$.
    
    Then the following are equivalent:
    \begin{itemize}
        \item There is an $S$-decoding polynomial over $\F$ for $\Z_m$ with $ \leq t$ monomials.
        \item There exists a subset $B$ of $\mgroup$ with $|B| \leq t$ which has the $0$-interpolation property for $S$.
    \end{itemize}
\end{lemma}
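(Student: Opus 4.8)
**The plan is to prove the two implications separately, unwinding the definitions on each side and exploiting the discrete Fourier / evaluation duality on the group $\mgroup$.**

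First I would set up notation: fix a primitive $m$th root of unity $\gamma = \gamma_m \in \F$, so that $\mgroup = \{\gamma^0, \gamma^1, \ldots, \gamma^{m-1}\}$, and for each $s \in S$ think of the point $\gamma^s \in \mgroup$. The key observation linking the two sides is that a polynomial $R(Z) = \sum_{s \in S} c_s Z^s$, when we only care about its values on $\mgroup$, is determined by the residues of its exponents mod $m$, and conversely any assignment of values to $\mgroup$ extends to such an $R$. The rough dictionary is: ``$S$-decoding polynomial $P$'' $\leftrightarrow$ ``the functional $R \mapsto \langle P|_{\mgroup}, R|_{\mgroup}\rangle$ (suitably normalized) extracts $R(0) = c_0$.''

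For the direction \emph{(sparse decoding polynomial $\Rightarrow$ small interpolating set}): suppose $P(Z)$ has at most $t$ monomials, $P(\gamma^s) = 0$ for $s \in S\setminus\{0\}$, and $P(1) = 1$. Write $P(Z) = \sum_{j} p_j Z^j$ with at most $t$ nonzero terms. Let $B = \{\gamma^{-j} : p_j \neq 0\} \subseteq \mgroup$, a set of size $\leq t$. Now for $R(Z) = \sum_{s\in S} c_s Z^s$, consider the sum $\sum_j p_j \cdot R(\gamma^{-j})$; expanding, this equals $\sum_{s \in S} c_s \sum_j p_j \gamma^{-js} = \sum_{s\in S} c_s P(\gamma^{-s})$. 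Hmm --- I need $P$ evaluated at $\gamma^{-s}$ rather than $\gamma^s$; I would fix this by either replacing $P(Z)$ with $P(Z^{-1})$ reduced mod $Z^m-1$ (which keeps sparsity), or by defining $B = \{\gamma^{j} : p_j \neq 0\}$ and being careful about signs --- the point is that the matching-vector set $S$ is symmetric enough (or one just works with $-S$, which has its own decoding polynomial). After getting the signs right, $\sum_j p_j R(\gamma^{\pm j}) = \sum_{s} c_s P(\gamma^{\pm s}) = c_0 \cdot P(1) = c_0 = R(0)$, since all other terms vanish. So $E$ defined as this fixed linear combination of the values $R|_B$ works.

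For the converse (\emph{small interpolating set $\Rightarrow$ sparse decoding polynomial}): given $B \subseteq \mgroup$ with $|B| \le t$ and the map $E$, note that the $0$-interpolation property applied to the ``monomial'' test polynomials $R(Z) = Z^s$ for $s \in S$ forces $E$ to be linear on the relevant subspace --- more precisely, I would first argue $E$ can be taken $\F$-linear: the values $\{R|_B : R = \sum_{s\in S} c_s Z^s\}$ range over a linear subspace of $\F^B$, and $R \mapsto R(0)$ is linear on the coefficient space, so after possibly modifying $E$ off the subspace we may assume $E(y) = \sum_{b \in B} \lambda_b y_b$ for scalars $\lambda_b$. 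Then define $P(Z) = \sum_{b\in B} \lambda_b \cdot L_b(Z)$ where $L_b(Z)$ is chosen so that $L_b(Z) \bmod (Z^m-1)$ evaluates to the indicator of $b$ on $\mgroup$ --- actually cleaner: set $P(Z) = \sum_{b \in B} \lambda_b Z^{\log_\gamma b}$ (i.e. if $b = \gamma^{j}$ put a term $\lambda_b Z^{j}$, or $Z^{-j}$ to match the sign convention above), a polynomial with $\le t$ monomials. For $s \in S$, evaluating: testing $E$ on $R(Z) = Z^s$ gives $\sum_b \lambda_b b^s = E((\gamma^{s\cdot\text{(exponent of }b)})_{b}) = R(0) = [s = 0]$. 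This says exactly $P(\gamma^s) = [s=0]$ under the right sign convention, i.e. $P$ vanishes on $\gamma^s$ for $s \in S\setminus\{0\}$ and $P(1) = 1$: an $S$-decoding polynomial of sparsity $\le t$.

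\textbf{The main obstacle I anticipate} is bookkeeping around the sign/inverse conventions (whether $B$ sits at $\gamma^j$ or $\gamma^{-j}$, and whether one needs $S$ or $-S$) together with the reduction of an a priori arbitrary function $E$ to a linear functional --- the latter requires knowing that the set of achievable value-vectors $\{R|_B\}$ together with the target values $R(0)$ is consistent enough that a linear $E$ suffices, which follows because both are linear images of the coefficient space $\F^S$ and the property ``$E(R|_B) = R(0)$ for all such $R$'' pins down $E$ on that image (any linear extension off the image works). Everything else is a routine expansion of $\sum_j p_j \gamma^{js} = P(\gamma^s)$ and the orthogonality-style cancellation coming from $P(\gamma^s) = 0$ for $s \ne 0$. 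I would present the linear-functional normalization as a small preliminary remark, then give the two explicit constructions of $B$ from $P$ and of $P$ from $(B,E)$, checking in each case the defining conditions.
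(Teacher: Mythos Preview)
Your approach is essentially the paper's, and it is correct. The sign/inverse obstacle you anticipate is illusory: if you take $B = \{\gamma^j : p_j \neq 0\}$ with \emph{no} minus sign, then the symmetry $(\gamma^j)^s = (\gamma^s)^j$ gives
\[
\sum_j p_j\, R(\gamma^j) \;=\; \sum_{s\in S} c_s \sum_j p_j (\gamma^s)^j \;=\; \sum_{s\in S} c_s\, P(\gamma^s) \;=\; c_0
\]
directly---this is exactly the paper's computation, with no need for $-S$ or $P(Z^{-1})$. The converse works the same way: writing $b = \gamma^{d_b}$ and setting $P(Y) = \sum_{b\in B} \lambda_b Y^{d_b}$ (positive exponent) already yields $P(\gamma^s) = \sum_b \lambda_b b^s = E((b^s)_{b\in B}) = [s=0]$. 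Your handling of the linearity of $E$ (replace it by a linear extension off the image subspace) is in fact slightly more careful than the paper's one-line justification.
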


\begin{proof}
    Suppose $P(Y) = \sum_{j=1}^t e_j Y^{d_j}$ is an $S$-decoding polynomial over $\F$ for $\Z_m$.
    So $P(\mgen^s) = 0$ for $s \in S \setminus \{0\}$, and $P(1) = 1$.

    Define $B = \{ \mgen^{d_j} \mid j \in [t] \}$.

    Let $R(Z)$ be a polynomial given by:
    $$ R(Z) = \sum_{s \in S} c_s Z^s.$$

    Then we can use the following expression for computing $c_0$ from $R|_B$:
    \begin{align*}
        \sum_{j \in [t]} e_j R(\mgen^{d_j}) &=  \sum_{j \in [t]} e_j \sum_{s \in S} c_s \cdot (\mgen^{d_j})^ s\\
        &= \sum_{s \in S} c_s \sum_{j \in [t]} e_j \cdot (\mgen^s)^{d_j} \\
        &= \sum_{s \in S} c_s P(\mgen^s)\\
        &= c_0 P(\mgen^0) + \sum_{s \in S \setminus \{0\}} c_s P(\mgen^s)\\
        &= c_0.
    \end{align*}
    This implies that $B$ has the $0$-interpolation property for $S$.

    Now we show the other direction.
    Suppose $B \subseteq \mgroup$ has the $0$-interpolation property for $S$.
    Let $E: \F^B \to \F$ be the corresponding map.
    First observe that $E$ must be a linear map; this is because for polynomials $R_1(Z), R_2(Z)$ we have $(R_1 + R_2)|_B = (R_1)|_B + (R_2)|_B$ and $(R_1 + R_2)(0) = R_1(0) + R_2(0)$.
    
    This means that any polynomial $R(Z)$ of the form $\sum_{s \in S} c_s Z^s$ that vanishes on all points in $B$ must also vanish at $0$. 

    Let the elements of $B$ be $b_1 = \mgen^{d_1}, \ldots,  b_t = \mgen^{d_t}$.
    Define the $[t] \times S$ matrix $W$ with $(j, s)$ entry equal to $b_j^s = \mgen^{sd_j}$. Expressing the above information in terms of $W$ we get that for any vector $\mathbf c \in \F^S$, if $W \mathbf c = 0$, then we must have $w_0 \cdot \mathbf c = 0$, 
    where $w_0 \in \F^S$ is the vector with $1$ in coordinate $0$ and $0$ in all the remaining coordinates $S \setminus \{0\}$.

    This means that the rows of $W$ span $w_0$.
    Suppose $e \in \F^t$ is such that $e W = w_0$. Then it is easy to check that the polynomial
    $$P(Y) = \sum_{j = 1}^{t} e_j Y^{d_j}$$
    is an $S$-decoding polynomial over $\F$ for $\Z_m$.
\end{proof}

\begin{lemma}[$0$-interpolation property with multiplicity from the $0$-interpolation property without multiplicity]
\label{lem:SmtoSM}
    Suppose $p$ is a prime and $m$ is relatively prime to $p$.  
    Let $M = mp$.
    Let $\phi: \Z_m \times \Z_p \to \Z_M$ be the Chinese remainder isomorphism.

    Suppose $S_m \subseteq \Z_m$, $S_M \subseteq \Z_M$ and $e \in \{1, 2, \ldots, p\} \in \mathbb N$ satisfy:
    $$ 0 \in S_M \subseteq \phi( S_m \times \{0,1,\ldots, e-1\} ).$$
    %Let $S_m \subseteq \Z_m$ with $0 \in S_m$, and define $S_M \subseteq \Z_M$ by
    %$$ S_M = S_m \times \{0,1, \ldots, e-1\} \subseteq \Z_m \times \Z_p \simeq \Z_M,$$
    %for some $e \in [p]$.

    Let $\F$ be a field of characteristic $p$ containing all $m$ $m$th roots of $1$. Let $\mgroup$ denote this set of $m$th roots of $1$.

    Suppose $B \subseteq \mgroup$ is a $0$-interpolating set for $S_m$. 
    Then $B$ is a $0$-interpolating set of multiplicity $e$ for $S_M$.
\end{lemma}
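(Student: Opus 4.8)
The plan is to reduce the multiplicity-$e$ interpolation problem for $S_M$ to the ordinary (multiplicity-$1$) interpolation problem for $S_m$, which we can solve using the given set $B$. The key algebraic device is that working modulo $(Z^M-1)$ where $M = mp$ and $\mathrm{char}(\F)=p$ means $Z^M - 1 = (Z^m-1)^p$, so each $m$th root of unity is a root of multiplicity $p \geq e$; equivalently, order-$e$ evaluation of a polynomial at a point $b \in \mgroup$ ``sees through'' the reduction mod $(Z^M-1)$ and is determined by $A(Z) \bmod (Z-b)^p$, which is in turn determined by $A(Z) \bmod (Z^M-1)$.

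First I would take a polynomial $R(Z) = \sum_{s \in S_M} c_s Z^s$ and observe that, since $S_M \subseteq \phi(S_m \times \{0,\ldots,e-1\})$, every exponent $s$ can be written (after reducing mod $M$, which is harmless since we will only evaluate at $M$th roots of unity) via CRT as $s = \phi(a, j)$ with $a \in S_m$ and $0 \le j \le e-1$. The idea is then to ``separate the two CRT coordinates'': the $\Z_m$-part of the exponent controls the value of $Z$ on $\mgroup$, and the $\Z_p$-part of the exponent controls the derivative information. Concretely, I want to exhibit, for each point $b = \mgen^{d} \in B$ and each derivative order $\ell < e$, an explicit $\F$-linear combination of the data $\{R^{(\ell')}(b') : b' \in B,\ \ell' < e\}$ that recovers $c_0 = R(0)$.

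The cleanest route is to build an auxiliary univariate polynomial in a new variable. Pick an element $\omega \in \F$ (or in a suitable extension, then descend) that is a primitive $p$th root of $1$ --- wait, $\mathrm{char}(\F) = p$ forbids this, so instead I would use the ``derivative picks out the $j$-th coefficient'' mechanism directly. Observe that for $b \in \mgroup$, using $(Z-b)^p = Z^p - b^p$ and $Z^M - 1 = (Z^m-1)^p$, one computes that the $p$ numbers $R^{(0)}(b), \ldots, R^{(p-1)}(b)$ are precisely the coefficients (in the $(Z-b)$-adic expansion) of $R(Z) \bmod (Z-b)^p$, and substituting $s = \phi(a,j)$ shows that $R^{(j)}(b)$ equals $\binom{s}{j}$-weighted sums --- more usefully, that the generating function $\sum_{j<e} R^{(j)}(b)\, T^j \bmod T^e$ equals $\sum_{a \in S_m}\big(\sum_{j<e} c_{\phi(a,j)} \mu^j\big) b^{a}$ for the appropriate reparametrization, where the inner sum is a fixed scalar $c'_a$ depending only on $a$. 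Thus the vector $(c'_a)_{a \in S_m}$ is exactly the coefficient vector of a polynomial $R'(Z) = \sum_{a \in S_m} c'_a Z^a$ of the form covered by the $0$-interpolation property for $S_m$, and $c'_0 = c_0$ since the only $s \in S_M$ with $\Z_m$-part $0$ and contributing to the constant term is $s = \phi(0,0) = 0$ (here is where $0 \in S_M$ and the CRT structure are used to ensure no other exponent collapses to $0$). Applying the map $E$ for $B$ from the $0$-interpolation hypothesis to $R'|_B$ --- whose values $R'(b)$ are $\F$-linear in the available order-$e$ data $R^{(<e)}|_B$ --- recovers $c'_0 = c_0 = R(0)$.

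The main obstacle I anticipate is bookkeeping the CRT reparametrization carefully: I need to check that $b^s$ for $b \in \mgroup$ and $s = \phi(a,j)$ depends on $(a,j)$ exactly in the factored way claimed (it should, since $b^m = 1$ forces $b^{\phi(a,j)} = b^{(\text{something}\equiv a \bmod m)}$ up to a unit depending on $j$), and that the linear map extracting $\sum_{j<e} c_{\phi(a,j)}\mu^j$ from the Hasse-derivative data is well-defined and independent of which $b$ it is read off at --- i.e. that the construction is genuinely a linear function of $R^{(<e)}|_B$ and not of $R$ itself. Once that is pinned down, linearity (as in the proof of Lemma~\ref{lem:polytointerpolate}, where $E$ is automatically linear) makes the composition go through, and the bound $|B|$ is inherited unchanged.
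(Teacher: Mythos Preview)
Your overall strategy matches the paper's: exploit $(Z-b)^p = Z^p - b^p$ to compute $R(Z)\bmod (Z-b)^p$ explicitly, CRT-decompose each exponent as $s=\phi(a,j)$, and reduce to the $0$-interpolation problem for $S_m$. But the crucial middle step is not right. You introduce an undefined $\mu$, set $c'_a=\sum_{j<e}c_{\phi(a,j)}\mu^j$, and then assert $c'_0=c_0$ ``since the only $s\in S_M$ with $\Z_m$-part $0$ contributing is $s=0$''. That justification is false: the hypothesis $S_M\subseteq\phi(S_m\times\{0,\dots,e-1\})$ does not forbid $\phi(0,j)\in S_M$ for some $j>0$, so $c'_0$ would pick up spurious terms $c_{\phi(0,j)}\mu^j$. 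The paper's resolution is to forget the $(Z-b)$-adic viewpoint at this stage and instead extract the coefficient of $Z^0$ in $R(Z)\bmod(Z^p-b^p)$, written as a polynomial in $Z$ of degree $<p$. Since $Z^{\phi(a,j)}\equiv b^{\phi(a,j)-j}Z^{j}$ there, only the $j=0$ terms contribute, and (using $b^m=1$ and $\phi(a,0)\equiv a\bmod m$) that coefficient equals $R_1(b)$ for $R_1(Y):=\sum_{a\in S_m}c_{\phi(a,0)}Y^{a}$, which has the right support and constant term $c_{\phi(0,0)}=c_0$. Concretely this $Z^0$-coefficient is the linear combination $\sum_{j<p}(-b)^{j}R^{(j)}(b)$ of the Hasse-derivative data, answering your ``is it genuinely a function of $R^{(<e)}|_B$'' worry.

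You also do not address why order-$e$ data (rather than order-$p$) suffices when $e<p$. The paper handles this separately: for $e\le j\le p-1$ and any $s\in S_M$ one has $s\bmod p<e\le j$, so Lucas's theorem gives $\binom{s}{j}\equiv 0\pmod p$ and hence $R^{(j)}\equiv 0$ identically; the missing higher derivatives are automatically zero.
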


\begin{proof}
    Let $R(Z)$ be a polynomial of the form $\sum_{s \in S_M} c_s Z^s$.

    We first prove this for $e = p$.

    The key fact that we will use is that for an element $b \in \F$, specifying $R^{(<p)}(b)$ is exactly equivalent to specifying the remainder $R(Z) \mod (Z - b)^p$.

    %Let $\phi: \Z_m \times \Z_p \to \Z_M$ be the Chinese remainder isomorphism.
    %It is well known that the explicit form of $\phi(s', s'')$ is 
    %$$ \phi(s', s'') = u' s'p + u'' s'' m,$$ for $u' \in \Z_m, u'' \in \Z_p$ with
    %$u' p = 1 \in \Z_m$ and $u'' m = 1 \in \Z_p$.

    Then $R(Z) \mod (Z-b)^p = R(Z) \mod (Z^p - b^p)$ can be computed as:
    \begin{align*}
    &\sum_{s \in S_M}   c_s   ( Z^s \mod (Z^p - b^p))\\
    &= \sum_{s \in S_M}  c_s (Z^{s \mod p} \cdot  b^{p \cdot \lfloor s/p \rfloor}) \\
    &= \sum_{s'' \in [e]} \left( \sum _{s' \in S_m}  c_{\phi(s',s'')} b^{p \cdot \lfloor \phi(s',s'')/p \rfloor}              \right) Z^{s''}\\
    &= \left( \sum _{s' \in S_m}  c_{\phi(s',0)} b^{p \cdot \lfloor \phi(s',0)/p \rfloor}    \right) Z^{0} + \sum_{s'' \in \{1, \ldots, e-1\}} \left( \sum _{s' \in S_m}  c_{\phi(s',s'')} b^{p \cdot \lfloor \phi(s',s'')/p \rfloor}              \right) Z^{s''}
    \end{align*}
    Note that $\phi(s',0)$ is a multiple of $p$ in $\Z_M$, and thus $p \cdot \lfloor \phi(s', 0)/p \rfloor = \phi(s',0)$. Now $\phi(s',0) \equiv s' \mod m$, and thus for any $b \in H$, we have $b^{\phi(s',0)} = b^{s'}$.
    Thus, if we define $R_1(Z)$ by:
    $$R_1(Y)  = \sum_{s' \in S_m} c_{\phi(s',0)}Y^{s'},$$ the constant term of $R(Z) \mod (Z-b)^p$ equals:
    $$ \sum_{s' \in S_m} c_{\phi(s',0)} b^{\phi(s',0)} = \sum_{s' \in S_m} c_{\phi(s',0)} b^{s'}= R_1(b).$$
    Thus the order $p$ evaluations of $R$ at $B$ determines the evaluations of $R_1$ at $B$.
    By the $0$-interpolation property of $B$ for $S_m$, we know that $R_1|_B$ determines the constant coefficient of $R_1$, namely $c_{\phi(0,0)} = c_0$.
    This completes the proof for $e = p$.

    To get the result for any $e \leq p$, we note that $R^{(j)}(Z) = 0$ for any $j \in \{e, e+1, \ldots, p-1\}$. This is because our hypothesis about $S_M$ means that for any $s \in S_M$, $s$ is either $0,1, \ldots$ or  $e-1 \mod p$ -- and so the $j$'th Hasse derivative of $Z^s$, ${s \choose j} Z^{s-j}$, is $0$ mod $p$ by Lucas's theorem.
    %\snote{WHY?}\mnote{Agree - why? Need to relate to $S_M \subset \phi()$?}: 
    This means that the derivatives of order $\geq e$ are not providing any information. Thus knowledge of $R^{(<e)}(b)$ is equivalent to knowledge of $R^{(<p)}(b)$, and thus the previous result implies that we may just take order $e$ evaluations instead of order $p$ evaluations to recover $c_0$.
\end{proof}

\iffalse{
\begin{remark}
The hypothesis relating $S_m, S_M$ and $e$ in Lemma~\ref{lem:SmtoSM}
can be further weakened. Instead of assuming that 
$$0 \in S_M \subseteq \phi(S_m \times \{0,1, \ldots, e-1\}),$$
the same statement and proof work with the following weaker assumption:

     ``Suppose $S_M \subseteq \Z_M$ with $0 \in S_M$.
    Define $S_m \subseteq \Z_m$ by:
    $$S_m = \{s' \in \Z_m \mid  \phi(s', 0) \in S_M \}.$$
    Define $e$ to be the smallest integer in $\{1,\ldots, p\}$
    such that every $(a,b) \in \phi^{-1}(S_M)$ has $b < e$."

\end{remark}

} \fi

%\fnote{I guess it would be clearer if instead of defining $S_m$ we say it has the property that it is equal to $S_m$?}
%\mnote{Why is version weaker? It would be nice to say what makes it different. I think we mean to say that $\phi(s,b) \in S_M$ can be ignored if $\phi(s,0)\not\in S_M$? Perhaps stress this somehow (if correct)? }
%\snote{I am dropping the remark - in hindsight it is too clumsy a statement to mention.}

\section{The PIR scheme}

In this section, we give our improved PIR scheme.

To specify the PIR scheme, we need:
\begin{itemize}
    \item A prime $p$ and a natural number $m$ relatively prime to $p$.  Define $M = mp$.
    \item An integer $e$ with $1 \leq e \leq p$ -- this will govern the multiplicity of interpolation we need.
    \item A finite field $\F$ of characteristic $p$ containing the set $\mgroup$ of $m$ different $m$th roots of $1$.
    \item A set $S_M \subseteq \Z_M$.
    \item An $S_M$-Matching Vector Family $(u_i, v_i) \in (\Z_M^k)^2$ for $i \in [n]$.
    \item An set $B = \{b_1, \ldots, b_t\} \subseteq \mgroup$ with the $0$-interpolating property with multiplicity $e$ for $S_M$ (over $\F$).
\end{itemize}

 When we instantiate this for a given $k$, we would like $B$ to be small (this governs the number of servers) and $n$ to be big (this governs the size of the database that can be handled with $\poly(k)$ communication). As we will see, we will eventually instantiate this by taking an 
 $S_m \subseteq \Z_m$  for which there is a sparse $S_m$-decoding polynomial over $\F$, and then take $S_M = S_m \times \{0,1\} \subseteq \Z_m \times \Z_p \simeq \Z_M$ -- and fortunately there happens to be a large $S_M$-Matching Vector Family for this set $S_M$.

Suppose the user is interested in finding the value of
$a_\tau$. The PIR scheme works as follows:
\begin{enumerate}
    \item The user picks a random ${\bm \beta} \in \mgroup^k$.
    \item Define the parametrized curve $C$ to be the tuple of polynomials:
    $$ C(Z) =   {\bm \beta} \odot Z^{v_\tau} \in \F[Z]^k.$$
    Here $v_\tau$ is viewed as a vector in $\{0,1,\ldots, M-1\}^k \subseteq \mathbb N^k$.
    Observe that $C$ maps $\mgroup$ to $\mgroup^k$.
    %\fnote{we defined $\odot$ just between two vectors}
    %\snote{Good point, but these vectors can be in any ring - the definition didn't specify, so it is okay.}
    %\snote{Oh - you are here? You are doing the $H$ -> $H_m$ transformations? }
    For every $i \in \{1,\ldots,t\}$, the user sends $C(b_i) \in \mgroup^k$ to server $i$.
    \item 
        The servers define a polynomial 
        $F(X_1,..., X_k) \in \F[X_1,..., X_k]$
    representing the data as follows:
    $$F(X_1,..., X_k) = \sum a_i {\mathbf X}^{u_i}$$
    Here the $u_i$ are viewed as vectors in $\{0,1,\ldots, M-1\}^k \subseteq \mathbb N^k$.
    
    Each server evaluates $F^{(<e)}$ at the point in $\mgroup^k$ that it received, and returns
    that to the user.
    Thus server $i$ sends $$F^{(<e)}(C(b_i)) \in \F^{{k + e-1 \choose e-1}}.$$
    \item The user will now use all the received information to deduce $a_\tau$.

    \paragraph{The plan:}
    Define $A(Z) = F(C(Z))$, which is a polynomial of degree at most $(M-1)^2k$.
    Define $A_1(Z) = A(Z) \mod (Z^{M} - 1)$.
    Observe that $A_1(Z)$ has degree $< M$ and is of the form:
    $$ {\bm \beta}^{u_{\tau}}a_\tau +  \sum_{s \in S\setminus \{0\}} w_s Z^s.$$ 
    (Here we view each $s \in S \subseteq \Z_M$ as an integer in $\{0, \ldots, M-1\}$.

    The user will find $A_1(0) = {\bm \beta}^{u_{\tau}} a_\tau$, from which $a_\tau$ can be computed.

    To do this, using the fact that $B$ is a $0$-interpolating set with multiplicity $e$, it suffices for the user to find $(A_1^{(<e)}(b_i))_{i = 1}^t$.

    How do we find $A_1^{(<e)}(b_i)$? We now note that it is the same as $A^{(<e)}(b_i)$.
    Indeed, since $(Z-b_i)$ divides $Z^m - 1$, we have that $(Z - b_i)^p$ divides $(Z^m-1)^p = (Z^M - 1)$, and so:
    $$A^{(<p)}(b_i) = A_1^{(<p)}(b_i),$$ and in particular, $$A^{(<e)}(b_i) = A_1^{(<e)}(b_i).$$

    Finally $A^{(<e)}(b_i)$ can be computed by the user from $F^{(<e)}(C(b_i))$ and knowledge of $C$.

    \paragraph{The implementation:}

    From the chain rule, the user can take the servers' answers, $F^{(<e)}(C(b_i))$, and knowledge of $C^{(<e)}(b_i)$,
    to compute     $A^{(<e)}(b_i)$ for each $i \in [t]$.

    By the earlier observation, $A^{(<e)}(b_i) = A_1^{(<e)}(b_i),$ and so the user knows order $e$ evaluations of $A_1$ at each point of $B$. 

    Finally, by the $0$-interpolation property with multiplicity $e$ of $B$, the user can recover $A_1(0) = {\bm \beta}^{u_\tau} a_\tau$, 
    and thus (since all coordinates of $\beta$ are nonzero) compute $a_\tau$, as desired.
\end{enumerate}
This concludes the description of the PIR scheme.

%\snote{The communication should depend on $e$ in the following theorem. Later in Thm 2.1 we will fix $e = 2$.}

%\snote{When we ref a particular theorem, we should use capitals so that it looks like " By Theorem 7 we get ... . When we don't use "ref" and are not stating the particular theorem number, then we don't capitalize, and we say "By a theorem of Yekhanin, ... " or "By the previous theorem, ... " }
\begin{theorem}
\label{thm:PIR-general}
The protocol described above gives a $t$-server PIR scheme for a database of size $n$ with the communication cost
$O(k + k^{e-1})$.

%and size
%$n$
%database, such that $k$ is the length of vectors and
%$n$
%is the size
%in the 
%$S_M$-matching vector family.
%$n^{O((\frac{log log n}{log n})^{1 - 1/(r + 1)})}$

\end{theorem}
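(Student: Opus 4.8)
The statement has three parts to verify: correctness, privacy, and the communication bound. I will address each in turn, largely following the "plan" and "implementation" discussion already laid out in the body, and recording the details that make each step rigorous.

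\emph{Communication.} This is the quickest piece. The user sends each server $i$ the point $C(b_i) \in \mgroup^k$, which is specified by $k$ field elements, so the uplink is $O(k)$ field elements. Each server returns $F^{(<e)}(C(b_i))$, a vector indexed by exponent vectors $j$ with $\wt(j) < e$; the number of such $j$ is $\binom{k+e-1}{e-1} = O(k^{e-1})$ for $e$ fixed (and exactly $1$ when $e=1$). Summing over the $t = |B|$ servers, and remembering $t$ is a constant, gives total communication $O(k + k^{e-1})$ field elements; multiplying by $\log|\F| = O(\log k)$ (once $\F$ is fixed) only changes the bound by a logarithmic factor, which I would absorb into the $\Otilde$ at the point where this theorem is applied, or state the bound in field elements here. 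The main thing to be careful about is that $e, t, |\F|$ are all treated as constants depending only on $m,p$, not on $n$.

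\emph{Correctness.} Here I would simply expand and justify each assertion of the "plan." First, $A(Z) := F(C(Z)) = \sum_i a_i \bm\beta^{u_i} Z^{\langle u_i, v_\tau\rangle}$ by the definition of $C$ and the fact that $X_t \mapsto \beta_t Z^{v_{\tau,t}}$ turns $\mathbf X^{u_i}$ into $\bm\beta^{u_i} Z^{\langle u_i,v_\tau\rangle}$ (integer inner product). Reducing mod $Z^M-1$ replaces each exponent $\langle u_i,v_\tau\rangle$ by its residue mod $M$, which by the $S_M$-matching vector property is $0$ when $i=\tau$ and lies in $S_M\setminus\{0\}$ when $i\neq\tau$; collecting terms gives $A_1(Z) = \bm\beta^{u_\tau}a_\tau + \sum_{s\in S_M\setminus\{0\}} w_s Z^s$, so $A_1(0) = \bm\beta^{u_\tau}a_\tau$ and, since every coordinate of $\bm\beta$ is an $m$th root of unity hence nonzero, $a_\tau = A_1(0)/\bm\beta^{u_\tau}$ is recoverable once $A_1(0)$ is known. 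Next, the identity $Z^M - 1 = (Z^m-1)^p$ in characteristic $p$ shows $(Z-b_i)^p \mid (Z^M-1)$ for each $b_i \in \mgroup$, hence $A$ and $A_1$ agree modulo $(Z-b_i)^p$, which by the first recalled fact about Hasse derivatives means $A^{(<p)}(b_i) = A_1^{(<p)}(b_i)$; truncating to order $e \le p$ gives $A^{(<e)}(b_i) = A_1^{(<e)}(b_i)$. Then the chain rule (the second recalled fact) lets the user compute $A^{(<e)}(b_i)$ from the server's answer $F^{(<e)}(C(b_i))$ together with $C^{(<e)}(b_i)$, which the user knows since it chose $\bm\beta$ and $v_\tau$. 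Finally, $A_1$ has the form $\sum_{s\in S_M} c_s Z^s$, so since $B$ has the $0$-interpolating property with multiplicity $e$ for $S_M$, applying the map $E$ to $(A_1^{(<e)}(b_i))_{i=1}^t$ yields $A_1(0)$, completing the recovery. (That $A_1(Z)$ literally is $\sum_{s\in S_M} c_s Z^s$ uses $0\in S_M$ to house the constant term.)

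\emph{Privacy.} Fix a server $i$. Its view is the single point $C(b_i) = \bm\beta \odot b_i^{v_\tau} = (\beta_1 b_i^{v_{\tau,1}}, \ldots, \beta_k b_i^{v_{\tau,k}})$. Since $b_i \in \mgroup$, each $b_i^{v_{\tau,t}} \in \mgroup$, and $\bm\beta$ is uniform on $\mgroup^k$; because $\mgroup$ is a group under multiplication, coordinate-wise multiplication by the fixed vector $b_i^{v_\tau} \in \mgroup^k$ is a bijection of $\mgroup^k$, so $C(b_i)$ is uniformly distributed on $\mgroup^k$ regardless of $\tau$. Hence the query distribution seen by any single server is independent of $\tau$, which is exactly the privacy condition (note no-collusion is used: different servers' queries $C(b_i), C(b_j)$ are correlated through the shared $\bm\beta$, but we only need each one marginally uniform).

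\emph{Main obstacle.} None of the steps is deep, but the one demanding the most care is the chain-rule step: one must confirm that the \emph{finitely many} Hasse derivatives $C^{(j)}(b_i)$ with $\wt(j) < e$ (equivalently $A^{(<e)}(b_i)$) are genuinely computable from the order-$<e$ data $F^{(<e)}(C(b_i))$ alone — i.e. that the chain rule for Hasse derivatives expresses $A^{(i)}$ purely in terms of $F^{(j)}(C(Z))$ with $\wt(j)\le i$ and derivatives of the $C$-coordinates, with no higher-order $F$-data leaking in. This is precisely the content of the second recalled fact about Hasse derivatives, so invoking it suffices; I would just make sure the order bookkeeping ($\wt(j) \le i < e$) lines up so that the servers' truncated answers really are enough. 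A secondary (routine) point is verifying that $A(Z)$, a priori of degree up to $(M-1)^2 k$, is handled entirely through its residue mod $Z^M-1$ and through evaluations at $\mgroup$, so its large degree never costs anything.
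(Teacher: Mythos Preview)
Your proposal is correct and follows essentially the same approach as the paper. The paper's own proof is extremely brief: it defers correctness entirely to ``the above procedure'' (the plan/implementation paragraphs you are expanding), states the communication count $O(k+k^{e-1})$ from the vector sizes $\F^k$ and $\F^{\binom{k+e-1}{e-1}}$, and for privacy just notes that $\bm\beta$ is uniform in $\mgroup^k$ independent of $\tau$ so each $C(b_i)$ has the same distribution for all $\tau$ --- exactly your group-translation argument, stated more tersely.
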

\begin{proof}
The above procedure shows that the user can recover the value of 
$a_\tau$
for any
$\tau \in [n]$
with probability 
$1$. In the protocol, the user sends a vector in 
$\F^k$
, and each server sends a vector in 
$\F^{{k + e-1 \choose e-1}}.$ 
Therefore, the total communication is 
$O(k + k^{e-1})$. For the privacy, note that 
${\bm \beta}$
is a uniformly random vector in 
$\mgroup^k,$ independent of $\tau$. 
Hence for every $i$, the query 
$C(b_i)$
has the same distribution  for every pair of possible message indices $\tau$ and $\tau'$. 
\end{proof}
%\mnote{ Light edits to to synchorinize proof with definition of PIR.}

\subsection{Instantiating the PIR Scheme: proof of the main theorem}
\label{sec:proofs-of-main-thms}

We are ready to prove Theorem \ref{thm:main}.

\begin{theorem*}
   Let $m$ be a positive integer with $r$ distinct prime factors, and let $p$ be a prime not dividing $m$.

    Suppose $\F$ is a field of characteristic $p$ such that, for $S^* \subseteq \Z_m$ being the canonical set for $m$, there is an $S^*$-decoding polynomial over $\F$ with sparsity $\leq t$.

%    Suppose $m, p$ have the property that, for $S^* \subseteq \Z_m$ being the canonical set for $m$, there is an $S^*$-decoding polynomial over $\F_p$ with sparsity $\leq t$.

    Then there is a $t$-server PIR scheme with communication $\exp(\Otilde( ( \log n)^{\frac{1}{r+1}}))$.
\end{theorem*}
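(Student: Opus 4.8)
The plan is to assemble the three results of the preceding sections---the equivalence between sparse decoding polynomials and small $0$-interpolating sets (Lemma~\ref{lem:polytointerpolate}), the lifting of a $0$-interpolating set to one of multiplicity $e$ (Lemma~\ref{lem:SmtoSM}), and the PIR scheme of Theorem~\ref{thm:PIR-general})---together with the large matching vector families of Theorem~\ref{thm:MVF}, all instantiated with the multiplicity parameter $e = 2$.

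First I would set $M = mp$ and let $\phi \colon \Z_m \times \Z_p \to \Z_M$ be the Chinese remainder isomorphism. The structural fact I would establish is that the canonical set $S_M$ for $M$ is exactly $\phi(S^* \times \{0,1\})$: by CRT an element $x \in \Z_M$ satisfies $x^2 \equiv x \pmod M$ iff its $\Z_m$-coordinate lies in $S^*$ and its $\Z_p$-coordinate is idempotent mod $p$, and modulo the prime $p$ the only idempotents are $0$ and $1$. Since $p \nmid m$, the integer $M$ has exactly $r+1$ distinct prime factors, so Theorem~\ref{thm:MVF} provides an explicitly constructible $S_M$-matching vector family $(u_1,v_1),\dots,(u_n,v_n) \in (\Z_M^k)^2$ of size
$$ n \;\ge\; \exp\!\left(\Omega\!\left(\frac{(\log k)^{r+1}}{(\log\log k)^{r}}\right)\right). $$

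Next I would treat the interpolation side. By hypothesis there is an $S^*$-decoding polynomial over $\F$ of sparsity $\le t$; we may assume $\F$ is finite (replacing it by $\F_p(\gamma_m)$, over which the defining conditions of the decoding polynomial form a consistent linear system of the same support), so a field element costs $O_{m,p}(1)$ bits. Lemma~\ref{lem:polytointerpolate} then produces $B \subseteq \mgroup$ with $|B| \le t$ having the $0$-interpolation property for $S^*$. Applying Lemma~\ref{lem:SmtoSM} with $S_m = S^*$, the set $S_M$ above, and $e = 2$---whose hypotheses hold since $0 \in S_M = \phi(S^* \times \{0,1\})$ and $e = 2 \le p$---upgrades $B$ to a $0$-interpolating set of multiplicity $2$ for $S_M$. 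At this point every ingredient required by the PIR scheme of the previous section is present ($p$, $M = mp$, $e = 2$, $\F$, $S_M$, the size-$n$ matching vector family, and $B$ with $|B| = t$ and multiplicity $2$), so Theorem~\ref{thm:PIR-general} yields a $t$-server PIR scheme for databases of size $n$ with communication $O(k + k^{e-1}) = O(k)$.

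It remains to unwind the parameters: given a target database size $n$, take $k$ minimal so that the matching vector family has size at least $n$. The displayed bound then forces $\log n = \Theta\!\big((\log k)^{r+1}/(\log\log k)^{r}\big)$, and since $\log\log k = \Theta(\log\log n)$ in this range, this rearranges to $\log k = \Otilde\!\big((\log n)^{1/(r+1)}\big)$; hence the communication is $O(k) = \exp\!\big(\Otilde((\log n)^{1/(r+1)})\big)$, as claimed. The only non-routine step is the identity $S_M = \phi(S^* \times \{0,1\})$---it is precisely what makes Theorem~\ref{thm:MVF} applicable to $S_M$ while simultaneously verifying the hypothesis of Lemma~\ref{lem:SmtoSM}---and everything else is the bookkeeping above; so I expect the brunt of the work (already done in the previous sections) to be in the lemmas rather than in this instantiation.
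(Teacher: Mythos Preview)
Your proposal is correct and follows essentially the same route as the paper: set $M = mp$, use the CRT identity $S_M = \phi(S^* \times \{0,1\})$ to invoke Theorem~\ref{thm:MVF} for an $S_M$-matching vector family over $\Z_M$ with $r+1$ prime factors, apply Lemma~\ref{lem:polytointerpolate} and then Lemma~\ref{lem:SmtoSM} with $e=2$ to upgrade $B$, and feed everything into Theorem~\ref{thm:PIR-general}. Your write-up is in fact slightly more explicit than the paper's on two points---justifying the idempotent identity $S_M = \phi(S^*\times\{0,1\})$ and reducing to a finite $\F$---but the argument is the same.
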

    
\begin{proof}
By our assumption about $\F$, we know that $\F$ is a finite field of characteristic $p$ which contains all $m$ $m$th roots of $1$ (i.e., $m | (|\F|-1)$). Let $\mgroup \subseteq \F$ denote this set of $m$th roots of $1$ in $\F$.

Let 
$M = mp$
, and let $S_M$ be the canonical set (recall Definition~\ref{def:canonicalset}) for $M$.
%\mnote{What is the canonical set? Point to previous/first use of this phrase?} 
Use Theorem \ref{thm:MVF} to  obtain an
$S_M$-matching vector family $\Fam$ of size $n$ and dimension $k$, where 
$k = \exp(\Otilde( ( \log n)^{\frac{1}{r+1}}))$
, and $r + 1$ comes from the fact that $M$ is the product of $r + 1$ distinct primes. 

By Lemma \ref{lem:polytointerpolate} applied to the $S^*$-decoding polynomial given by our hypothesis, we get a subset 
$B \subseteq \mgroup$, with $|B| \leq t$, which has the
$0$-interpolation property for 
$S^*$.

We will now apply Lemma \ref{lem:SmtoSM} to $B$ to get that it is a $0$-interpolating set with multiplicity $e = 2$. The key point is that the canonical set for $M$, namely $S_M$, is a cartesian product of the canonical set for $m$, namely $S^*$, under the Chinese remainder isomorphism.
Indeed:
$$0 \in S_M \subseteq \Z_M,$$
and that
$$\phi(S^* \times \{0, 1\}) = S_M$$
where $\phi$ is the Chinese remainder isomorphism $\phi : \Z_m \times \Z_p \to \Z_M$
and $S^*$ is the canonical set for 
$m$.
Thus by Lemma~\ref{lem:SmtoSM} with $e = 2$, we get that
$B$ 
is a 
$0$-interpolating set of multiplicity $2$ for $S_M$. 
\\
Now applying Theorem \ref{thm:PIR-general} to $B$ and $\Fam$ (with $e = 2$), we get a PIR scheme with communication 
$O(k + k^{e-1}) = O(k) = \exp(\Otilde( ( \log n)^{\frac{1}{r+1}}))$.
%\snote{And then we have to say why $O(k)$ is the claimed function of $n$.}
%\mnote{In particular explicitly point out that the exponent depending on $e$ affects only the constant in $\tilde{O}$ and not the exponent $1/(r+1)$.}
\end{proof}

With the main theorem in hand, we now give a proof of the main corollary which simply plugs in the best known sparse $S$-decoding polynomials.
\begin{corollary1}
 For each $r \geq 2$, there is a $t$-server PIR with communication
    $\exp(\Otilde(\log n)^{\frac{1}{r+1}})$ according to the following table:
    \begin{center}
\begin{tabular}{ |c|c| } 
 \hline
 r & t  \\ 
 \hline
 $r = 1$ & 2\\
 $r$ even, $r \leq 102$ & $\left(\sqrt{3}\right)^r$ \\ 
 $r$ odd, $3 \leq r\leq 103$  & $8 \cdot \left(\sqrt{3} \right)^{r-3} $ \\ 
 $r \geq 104$  &  $\left( 3/4 \right)^{51} \cdot 2^r$ \\ 
 \hline
\end{tabular}
\end{center}
\end{corollary1}

\begin{proof}
For every positive
$r$, 
by using Theorem \ref{thm:decoding-poly}, we get
an
$S_m$-decoding polynomial over a field of characteristic $2$ with 
$t$ monomials
for the corresponding 
$t$ as in the table and 
$m$
product of 
$r$
distinct odd primes. Now by applying Theorem \ref{thm:main}, to $p =2, m$, we get the result.
\end{proof}

\bibliographystyle{alpha}
\bibliography{pir}

\end{document}